\newcolumntype{C}{>{\centering\arraybackslash}X}
\newtheorem{proposition}{\textcolor{black}{Proposition}}
\newtheorem{lemma}{Lemma}
\newtheorem{theorem}{Theorem}
\theoremstyle{nonumberplain}
\newtheorem{proof}{Proof}
\definecolor{newextractedpurple}{RGB}{127,0,127}
\begin{document}

\title{Prioritized Information Bottleneck Theoretic Framework with Distributed Online Learning for Edge Video Analytics}

\author{Zhengru~Fang,~
Senkang~Hu,~\IEEEmembership{Student Member,~IEEE,}
Jingjing~Wang,~\IEEEmembership{Senior Member,~IEEE,}\\
Yiqin~Deng, 
Xianhao~Chen,~\IEEEmembership{Member,~IEEE}
and Yuguang Fang,~\IEEEmembership{Fellow,~IEEE, ACM}%
\IEEEcompsocitemizethanks{\IEEEcompsocthanksitem Z. Fang, S. Hu, Y. Deng and Y. Fang are with the Department of Computer Science, City University of Hong Kong, Hong Kong. E-mail: \{zhefang4-c, senkang.forest\}@my.cityu.edu.hk, \{yiqideng, my.fang\}@cityu.edu.hk.
\IEEEcompsocthanksitem J. Wang is with the School of Cyber Science and Technology, Beihang University, Beijing 100191, China, and also with Hangzhou Innovation Institute, Beihang University, Hangzhou 310051, China. Email: drwangjj@buaa.edu.cn.
\IEEEcompsocthanksitem X. Chen is with the Department of Electrical and Electronic Engineering, the University of Hong Kong, Hong Kong. E-mail: xchen@eee.hku.hk.}
\thanks{This work was supported in part by the Hong Kong SAR Government under the Global STEM Professorship and Research Talent Hub,  the Hong Kong Jockey Club under the Hong Kong JC STEM Lab of Smart City (Ref.: 2023-0108), and the Hong Kong Innovation and Technology Commission under InnoHK Project CIMDA. The work of J. Wang was supported in part by the National Natural Science Foundation of China under Grant No. 62222101 and No. U24A20213, in part by the Beijing Natural Science Foundation under Grant No. L232043 and No. L222039, and in part by the Fundamental Research Funds for the Central Universities. The work of Y. Deng was supported in part by the National Natural Science Foundation of China under Grant No. 62301300. The work of X. Chen was supported in part by the Research Grants Council of Hong Kong under Grant 27213824. A preliminary version has been accepted for IEEE Global Communications Conference (GLOBECOM 2024)\cite{fang2024pib}.}
}
\markboth{IEEE/ACM Transactions on Networking}%
{Shell \MakeLowercase{\textit{et al.}}: Bare Demo of IEEEtran.cls for IEEE Communications Society Journals}

\maketitle
\begin{abstract}
  Collaborative perception systems leverage multiple edge devices, such as surveillance cameras or autonomous cars, to enhance sensing quality and eliminate blind spots. Despite their advantages, challenges such as limited channel capacity and data redundancy impede their effectiveness. To address these issues, we introduce the Prioritized Information Bottleneck (PIB) framework for edge video analytics. This framework prioritizes the shared data based on the signal-to-noise ratio (SNR) and camera coverage of the region of interest (RoI), reducing spatial-temporal data redundancy to transmit only essential information. This strategy avoids the need for video reconstruction at edge servers and maintains low latency. It leverages a deterministic information bottleneck method to extract compact, relevant features, balancing informativeness and communication costs. For high-dimensional data, we apply variational approximations for practical optimization. To reduce communication costs in fluctuating connections, we propose a gate mechanism based on distributed online learning (DOL) to filter out less informative messages and efficiently select edge servers. Moreover, we establish the asymptotic optimality of DOL by proving the sublinearity of its regrets. To validate the effectiveness of the PIB framework, we conduct real-world experiments on three types of edge devices with varied computing capabilities. Compared to five coding methods for image and video compression, PIB improves mean object detection accuracy (MODA) by 17.8\% while reducing communication costs by 82.65\% under poor channel conditions.
\end{abstract}
\begin{IEEEkeywords}
Collaborative edge inference, information bottleneck, distributed online learning, variational approximations.
\end{IEEEkeywords}

\section{Introduction}
\subsection{Background}
\IEEEPARstart{V}{ideo} analytics is rapidly transforming various sectors such as urban planning, retail analysis, and autonomous navigation by converting visual data streams into useful insights\cite{padmanabhan2023gemel}. A large number of video cameras produce vast amounts of video data continuously and often require real-time video streams\cite{dai2022respire}. Numerous developing applications such as remote patient care\cite{wang2023contactless}, video games\cite{10175560}, UAV sensing\cite{10415249,tang2024decentralizedsemanticcommunicationcooperative} and virtual and augmented reality depend on the efficient analysis of video data with minimal delay\cite{10015055}.

The increasing number of smart devices requires a computational paradigm shift towards edge computing. This approach involves processing data closer to its source, resulting in several benefits compared to traditional cloud-based paradigms, particularly reduced latency and bandwidth costs. Even a delay as short as one second can have disastrous consequences. For example, interactive applications such as online gaming and video conferencing require latencies below 100 ms to ensure real-time feedback and a seamless user experience\cite{kamarainen2017measurement}. Similarly, VR/AR applications demand extremely low latencies, often less than 20 ms, to prevent motion sickness and maintain a high-quality immersive experience\cite{9944868}. Utilizing remote cloud services for data processing can result in significant latency increases, often exceeding 100 ms\cite{corneo2021much}. Moreover, the importance of privacy, particularly in regions with strict data protection laws such as the General Data Protection Regulation (GDPR), makes edge computing even more attractive\cite{marelli2018scrutinizing}. According to the Ponemon Institute, 60\% of the companies express apprehension towards cloud security and decide to manage their own data on-site to mitigate potential risks\cite{Ponemon2021}.

{\color{black}A key aspect of video analytics, particularly for Bird's Eye View (BEV) applications, is accurately capturing pedestrian occupancy across multiple camera views\cite{fang2024pacp,hu2024fullscenedomaingeneralizationmultiagent}. BEV representations rely on precise spatial context to depict ground-level scenes in order to minimize occlusions, blind spots, and viewpoint discrepancies. Pedestrian occupancy data enables reliable identification and localization of individuals within a shared view, refining collaborative perception and enhancing tasks such as detection and prediction in complex, dynamic environments. }However, the integration of edge devices into video analytics also brings in many significant challenges\cite{shao2023task}. The computational demands of deep neural network (DNN) models, such as GoogLeNet\cite{al2017deep}, which requires about 1.5 billion operations per image classification, place a substantial burden on the limited processing capacities of edge devices\cite{gao2024localization}. Additionally, the outputs from high-resolution cameras increase the communication load. For example, a 4K video stream requires up to 18 Gbps of bandwidth to transmit raw video data, potentially overwhelming wireless networks\cite{yaqoob2020survey}. Therefore, we need to explore efficient video coding for compressing streamed videos. As shown in \mbox{Fig. \ref{fig:traditional-model}}, traditional compression methods reconstruct streaming frames through efficient entropy coding and motion prediction. However, there is still a large amount of less informative data being processed, wasting communication bandwidth. For instance, if the tasks involve human recognition or positioning, the reconstructed background of each frame might not be useful for the application.

As shown in Fig. \ref{fig:IB-model}, the information bottleneck (IB) framework is a feasible choice for task-oriented video compression, enabling a trade-off between communication cost and prediction accuracy for specific tasks. However, the current communication strategies for integrating edge devices into video analytics are not effective enough. One major issue is how to handle the computational complexity and transmission of redundant data generated from the overlapping fields of view (FOVs) of multiple cameras\cite{cui2023stitched}. In scenarios with dense camera deployments, up to 60\% of data can be redundant due to overlapping FOVs, which unnecessarily overburdens the network\cite{jiang2020reinforcement}. In addition, these strategies often lack adaptability in transmitting tailored data features based on Region of Interest (RoI) and signal-to-noise ratio (SNR), resulting in poor video fusion or alignment. These limitations can negatively impact collaborative perception, even making it less effective than single-camera setups\cite{fang2024pacp}.

In this paper, we aim to refine multi-camera video analytics by developing a strategy to prioritize wireless video transmissions. Our proposed Prioritized Information Bottleneck (PIB) strategy attempts to effectively leverage SNR and RoI to selectively transmit data features, significantly reducing computational load and data transmissions. Our method can decrease data transmissions by up to 82.65\%, while simultaneously enhancing the mean object detection accuracy (MODA) compared to current state-of-the-art techniques. This approach not only compresses data but also intelligently selects data for processing to ensure only relevant information is transmitted, thus mitigating noise-induced inaccuracies in collaborative sensing scenarios. This innovation sets a new benchmark for efficient and accurate video analytics at the edge.

\subsection{State-of-the-Art}
\label{sec:Related Work}
This subsection reviews advancements in edge video analytics, with an emphasis on the designs for communication-computing latency reduction. We explore the information bottleneck method to enhance task-oriented performance by minimizing data redundancy. Additionally, we investigate online learning for dynamic ROI management and perceptual quality.

\subsubsection{Edge Video Analytics}
\begin{figure}[t]
  \centering
  \subfigure[Traditional compression method with redundant data.]{
    \includegraphics[width=0.47\textwidth]{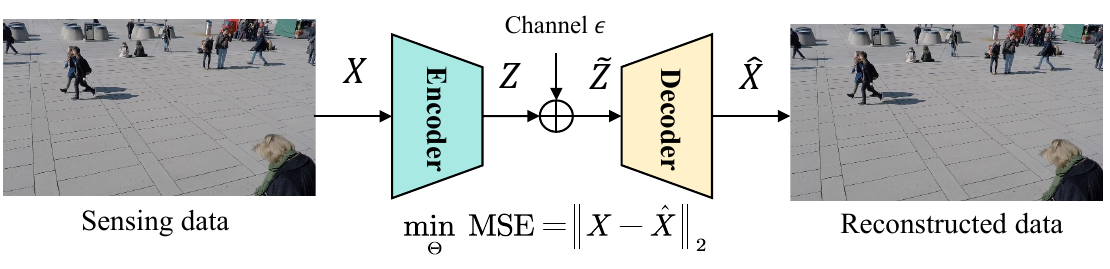}
    \label{fig:traditional-model}
  }
  \subfigure[Information bottleneck method for task-specific compression.]{
    \includegraphics[width=0.47\textwidth]{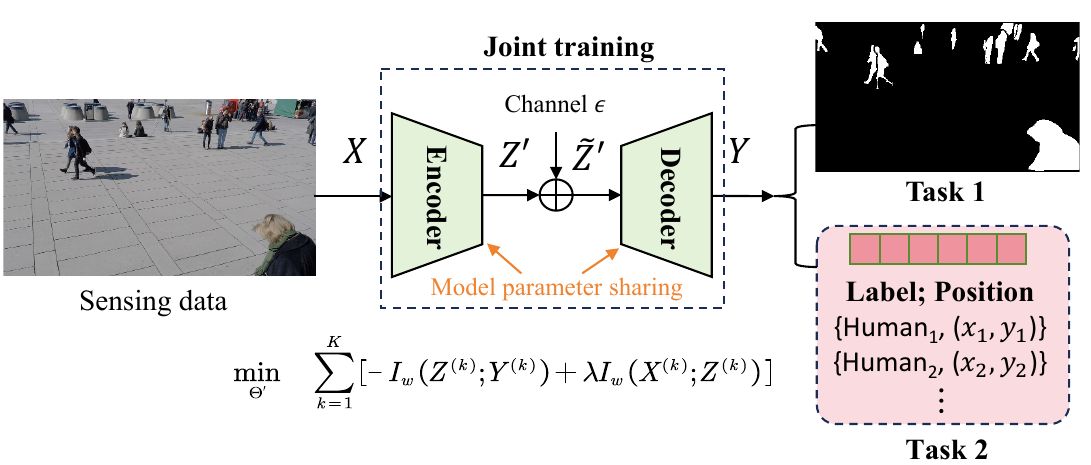}
    \label{fig:IB-model}
  }
  \caption{Comparison of compression methods: (a) Traditional compression method with redundant data, (b) Information bottleneck method for task-specific compression.}
  \label{fig:compression-methods}
\end{figure}

Live video analytics is crucial in various domains such as autonomous driving\cite{fang2024pacp,chen2023vehicle,hu2023adaptive, hu2024fullscenedomaingeneralizationmultiagent, hu2024agentscodriverlargelanguagemodel, hu2024collaborative}, mixed reality\cite{10217163,10090453}, 3D point cloud analytics\cite{10167694}, and traffic control\cite{10296872}. These applications, including object recognition tasks\cite{10101710}, are typically equipped with sophisticated machine learning models like Convolutional Neural Networks (CNNs) and Graph Neural Networks (GNNs)\cite{10018382,10526224}. However, offloading these applications to centralized clouds can result in unpredictable transmission delays in wide area networks, particularly when streaming high-quality videos\cite{10398474}. Therefore, researchers utilize edge computing to serve as a promising alternative to reduce service latency and energy consumption. Li \textit{et al.} propose a novel approach called \textit{ESMO} to optimize frame scheduling and model caching for edge video analytics\cite{10138921}. Khani \textit{et al.} introduce \textit{RECL}, a new framework for video analytics that integrates model reuse and online retraining to quickly adapt expert models to specific video scenes, optimizing resource allocation and achieving substantial performance gains over prior methods\cite{khani2023recl}. Wang \textit{et al.} design an MEC-enabled multi-device video analytics system using a Markov decision process to address real-time ground truth absence and content-varying degradation-accuracy issues, significantly enhancing the accuracy-latency tradeoff through adaptive information gathering and efficient bandwidth allocation. However, few works consider how to strike a dynamic balance between channel resources and inference performance in a multi-camera sensing system.

In typical edge video analytics scenarios, the lack of infrastructure and limited bandwidth makes real-time object detection challenging, especially for the multi-view camera sensing for wildlife monitoring or security surveillance in remote areas\cite{10025689}. To achieve real-time object sensing, it is crucial to reduce redundant information and the bandwidth demand. Semantic communication can address this challenge by transmitting only the essential semantic information, thereby compressing data streams and reducing transmission overhead. Zhang \textit{et al.} propose a comprehensive framework to highlight the importance of semantic communication in optimizing information transmission\cite{10634888}. Shao \textit{et al.} introduce a new conceptualization of semantic communication that characterizes it within joint source-channel coding theory, aiming to minimize the semantic distortion-cost region\cite{10540315}. Xie \textit{et al.} explore a deep learning-based semantic communication system with memory, showing how dynamic transmission techniques can enhance transmission reliability and efficiency by masking unessential elements\cite{10159023}. Zhou \textit{et al.} design and implement a deep learning-based image processing pipeline on the ESP32-CAM, proposing a DRL-based approach for efficient camera configuration adaptation in multi-camera systems\cite{9775607}. 
Existing research primarily focuses on rate-distortion (R-D) optimization, adapting the bitstream rate based on channel state information (CSI) to reconstruct raw videos\cite{10107793}. However, these methods rarely consider the performance of specific downstream tasks, such as mean object detection accuracy (MODA), as a system evaluation metric. Consequently, the transmitted information often contains redundant data.

To address this issue, researchers incorporate the information bottleneck (IB) framework to optimize edge video analytics by focusing on task-specific performance, thereby reducing redundancy\cite{10438074}. The IB framework helps in selectively transmitting only the most relevant features needed for specific tasks, enhancing efficiency. Pensia \textit{et al.} propose a novel feature extraction strategy in supervised learning that enhances classifier robustness to small input perturbations by incorporating a Fisher information penalty into the information bottleneck framework\cite{9088132}. Wang \textit{et al.} present a deep multi-view subspace clustering framework to extend the information bottleneck principle to a self-supervised setting, leading to superior performance in multi-view subspace clustering on real-world datasets\cite{10053658}. The IB tradeoff is well-suited for bandwidth-limited edge inference and is a key design principle in our study for efficient communication. Wang \textit{et al.} introduce the Informative Multi-Agent Communication (IMAC) method, which uses the information bottleneck principle to develop efficient communication protocols and scheduling for multi-agent reinforcement learning under limited bandwidth\cite{wang2020learning}. Shao \textit{et al.} propose a task-oriented communication scheme for multi-device cooperative edge inference, optimizing local feature extraction and distributed feature encoding to minimize data redundancy and focus on task-relevant information, leveraging the information bottleneck principle and extending it to a distributed deterministic information bottleneck framework. However, these existing studies often neglect the need to prioritize different data from multiple cameras for various downstream tasks, such as considering ROIs. Moreover, most existing studies overlook the correlation between multiple cameras in multi-view scenarios. Shao \textit{et al.} extract compact task-oriented representations based on the IB principle, but they overlook the fact that different tasks require varying levels of priority\cite{shao2023task}. By leveraging these correlations and levels of priority, it is possible to further reduce data rates by minimizing duplicate information across different camera feeds and enhance inference performance at the same time. 

\subsubsection{Learning-Based Transmission Scheduling}

Online learning in multi-agent deep reinforcement learning (MADRL) enhances multi-camera sensing under dynamic channels and overlapping ROIs. Effective transmission scheduling determines when and which agents communicate through binary vectors indicating allowed communications at specific time steps, forming a communication graph. Central transmission scheduling schemes use a globally shared policy to control communication. Kim \textit{et al.} propose SchedNet, which uses a global scheduler to limit broadcasting agents and reduce communication overhead\cite{kim2019schednet}. Du \textit{et al.} introduce FlowComm, forming a directed graph for communication\cite{du2021flowcomm}. Liu \textit{et al.} develop GA-Comm, using a two-stage attention network (G2ANet) to manage agent interactions\cite{liu2020gacomm}. Niu \textit{et al.} present MAGIC, a framework using a directed communication graph for enhanced coordination\cite{niu2021multi}. In distributed transmission scheduling schemes, each agent individually determines whether to communicate, forming a graph structure. Liu \textit{et al.} propose a framework for multi-agent collaborative perception, addressing communication group construction and decision-making for efficient bandwidth use, significantly reducing communication while maintaining performance\cite{liu2020when2com}. Deep learning optimizes these systems by refining communication actions and schedules, transmitting only relevant information, and minimizing redundancy. However, existing multi-camera cooperative sensing algorithms do not effectively address the transmission scheduling problem, particularly under dynamic wireless channels and overlapping ROIs.
\subsection{Our Contributions}

Edge computing plays a crucial role in collaborative perception systems, improving tracking precision and minimizing blind spots through multi-view sensing. However, challenges such as limited channel capacity and data redundancy impede their effectiveness. To address these issues, we propose the \underline{P}rioritized \underline{I}nformation \underline{B}ottleneck (\textbf{PIB}) framework for edge video analytics. Compared with the conference version\cite{fang2024pib}, this paper improves the MODA by up to 17.88\% and reduces the communication cost by 23.94\%. Our contributions are summarized as follows:

\begin{itemize}
  \item We propose the PIB framework that prioritizes the shared data based on the signal-to-noise ratio (SNR) and camera coverage of the region of interest (RoI), reducing redundancy both spatially and temporally. This approach avoids the need for video reconstruction at edge servers and maintains low latency.
  \item Our framework leverages a deterministic information bottleneck method to extract compact, relevant features, balancing informativeness and communication costs. For high-dimensional data, we apply variational approximations for practical optimization.
  \item To reduce communication costs in fluctuating links, we introduce a gate mechanism based on distributed online learning (DOL) to filter out unprofitable messages and efficiently select edge servers. We establish the asymptotic optimality of DOL by showing the sublinearity of its regrets.
  \item Our extensive experimental evaluations across different real-world hardware platforms demonstrate that PIB significantly enhances mean object detection accuracy (MODA) and reduces communication costs. Compared to TOCOM-TEM, JPEG, H.264, H.265, and AV1, PIB improves MODA by 17.8\% while reducing communication costs by 82.65\% under poor channel conditions. Additionally, our method can reduce the standard deviation of streaming packet sizes by up to 9.43\%, while simultaneously maintaining higher MODA, ensuring better transmission robustness under poor channel conditions.
\end{itemize}

The remainder of this paper is organized as follows: Sec. \ref{sec:System_Model} introduces the system model. Sec. \ref{sec:Problem Formulation} covers the problem formulation, including prioritized information bottleneck analysis and the CMAB problem. Sec. \ref{sec:Methodology} describes the methodology, focusing on the derivation of the IB problem's upper bound, loss function design, and the distributed gate mechanism. Sec. \ref{sec:Performance Evaluation} evaluates the performance of the PIB framework through simulations that forecast pedestrian occupancy in urban settings, considering communication bottlenecks, camera delays, and edge server connectivity.

\section{System Model}
\label{sec:System_Model}
\begin{figure}[t]
  \centering
  \includegraphics[width=0.48\textwidth]{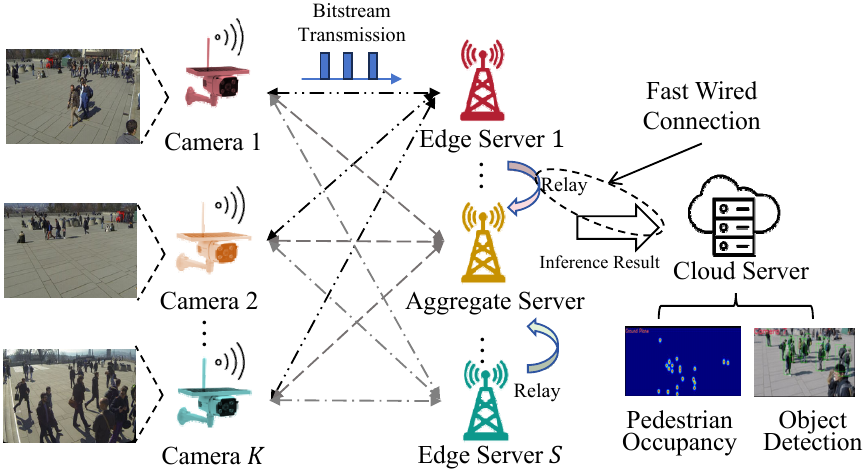}
  \caption{System model.}
  \label{fig:system model}
  \vspace{-3mm}
\end{figure}
{\color{black}{As illustrated in Fig. \ref{fig:system model}, our system comprises a set of edge cameras, denoted as \( \mathcal{K} = \{1, 2, \ldots, K\} \), and a set of edge servers, denoted as \( \mathcal{S} = \{1, 2, \ldots, S\} \). Each camera has a specific Field of view (FoV), \( \text{FoV}_k \), covering a subset of the monitored area. 
Our goal is to enable collaborative perception for pedestrian occupancy prediction under constrained channel capacity. The backbone of our system employs intermediate collaboration, where only encoded feature representations \(Z\) are transmitted to a central aggregate server for decoding and inference. Other edge servers serve solely as relay nodes, as they do not perform decoding or final inference due to the network structure which does not support task offloading for these operations. The aggregate server is chosen to minimize overall system delay, while the relay-only role of other edge servers ensures streamlined data flow. The key notations are given in Table \ref{tab:key_notations} for the ease of reading.}}
\begin{table}[ht]
\centering
\caption{{\color{black}{Key Notations}}}
\label{tab:key_notations}
\renewcommand{\arraystretch}{1.5} 
\color{black}
\begin{tabular}{|c|p{0.8\columnwidth}|} 
\hline
\multicolumn{1}{|c|}{\textbf{Notation}} & \multicolumn{1}{c|}{\textbf{Description}} \\ \hline
\( C_{k,s} \)                          & Capacity of the link between camera \( k \) and edge server \( s \)        \\ \hline
\( d_{k,s}^T \)                        & Transmission delay from camera \( k \) to edge server \( s \)             \\ \hline
\( d_{k,s_0}^I \)                      & Inference delay at the aggregate edge server \( s_0 \)                     \\ \hline
\( d_{r,s_0}^R \)                      & Relay delay from edge server \( r \) to aggregate server \( s_0 \)         \\ \hline
\( p_k \), \( w_k \)                   & Priority score and weight for camera \( k \)                               \\ \hline
\( X^{(k)} \)                          & Input data from camera \( k \)                                             \\ \hline
\( Z^{(k)} \)                          & Feature extracted from camera \( k \)                                      \\ \hline
\( Y \)                                & Random variable of the output                                              \\ \hline
\( \mathcal{E}_{k,s}^{c \rightarrow e} \) & Connection between camera \( k \) and edge server \( s \)                 \\ \hline
\( \mathcal{E}_{s}^{e \rightarrow e_0} \) & Connection between edge server \( s \) and aggregate server \( s_0 \)      \\ \hline
\( \mathcal{K} \)                      & Set of cameras selected at time \( t \)                                    \\ \hline
\( \Psi^F_s \)                         & Computational cost at edge server \( s \)                                  \\ \hline
\( \Psi^T_s \)                         & Computational cost of forwarding features                                  \\ \hline
\( \Psi^R_k \)                         & Remaining computing capacity of edge server \( k \)                        \\ \hline
\( \mathcal{T}_{k,s}^{c \rightarrow e} \) & Latency from camera \( k \) to edge server \( s \)                        \\ \hline
\( \mathcal{T}_{s}^{e \rightarrow e} \)   & Latency between edge server \( s \) and aggregate server \( s_0 \)         \\ \hline
\end{tabular}
\end{table}

\subsection{Communication Model}

We use Frequency Division Multiple Access (FDMA) to manage communication among cameras, defining the capacity \( C_{k,s} \) for each camera \( k \) and edge server \( s \) combination using the SNR-based Shannon capacity:
\begin{equation}
C_{k,s} = B_{k,s} \log_2 \left(1 + \text{SNR}_{k,s}\right),
\end{equation}
where \( B_{k,s} \) is the bandwidth allocated to the link between camera \( k \) and edge server \( s \), and \( \text{SNR}_{k,s} \) is the signal-to-noise ratio of this link. The transmission delay \( d_{k,s}^T \) is given by:
\begin{equation}
d_{k,s}^T = \frac{D}{C_{k,s}},
\end{equation}
where \( D \) is the data packet size. Each camera \( k \) decides whether to transmit data directly to its aggregate server or via a relay edge server based on channel quality: 1) If the channel quality is good, the camera transmits directly to the aggregate server $s_0$. The total delay is $d_{k,s}^{\text{total}} = d_{k,s_0}^T + d_{k,s_0}^I$, where \( d_{k,s_0}^I \) is the inference delay at the aggregate server. 2) If the channel quality is poor, the camera first transmits to a relay edge server \( r \), which then forwards the data to the aggregate server. The total delay in this case is $d_{k,s}^{\text{total}} = d_{k,r}^T + d_{r,s_0}^R + d_{k,s_0}^I$, where \( d_{r,s_0}^R \) is the relay delay. By dynamically choosing between relay and direct transmission, the system adapts to varying channel conditions, ensuring minimal delays and efficient use of network resources.

\subsection{Priority Weight Formulation}
\label{sec: Priority Weight Formulation}
Dynamic priority weighting is essential for optimizing network resource allocation, as various data sources require different levels of attention. Inspired by our previous work\cite{fang2024pacp}, we employ a dual-layer Multilayer Perceptron (MLP)\footnote{The MLP is trained in a supervised learning manner, where the input features are the normalized delay \({d}_{\text{norm}, k}\) and the normalized number of perceived moving objects \({\chi}_{\text{norm}, k}\). The target output is the optimal priority weight \(W_{\mathrm{target}}\). The loss function used for training is designed to minimize the discrepancy between the computed weights \(w_k\) and the target weights \(W_{\mathrm{target}}\), as described in Sec. \ref{sec:Network Loss Functions Derivation}.} to compute priority weights based on normalized delay and the number of perceived objects (\( \chi_k \)). 
\begin{equation}
p_k = \text{MLP}({d}_{\text{norm}, k}, {\chi}_{\text{norm}, k}; {\Theta_{M}}),
\end{equation}
where \( p_k \) denotes the computed priority score for camera \( k \), and \( {\Theta_{M}} \) represents the trainable parameters of MLP. The architecture of this MLP, featuring two layers, allows it to effectively model the interactions between delay and the number of perceived moving objects. Specifically, \( {d}_{\text{norm}, k} = \frac{{d}_k}{{d}_{\max}} \) and \( {\chi}_{\text{norm}, k} = \frac{{\chi}_k - {\chi}_L}{{{\chi}_U - {\chi}_L}} \). {\color{black}{To account for the dynamic nature of the system, we periodically update \( d_{\max} \). Specifically, we define \( d_{\max} \) as \(d_{\max} = \max_k (d_k) + \Delta\), where \( d_k \) is the delay for camera \( k \) and \( \Delta \) is a predefined threshold. This dynamic computation ensures that the normalized delay \( d_{\text{norm}, k} = \frac{d_k}{d_{\max}} \) remains within a reasonable range, preventing resource overcommitment. \( {\chi}_k \) represents the number of moving objects perceived by camera \( k \), while \( {\chi}_{U} \) and \( {\chi}_{L} \) denote the upper and lower bounds of the number of moving objects that any edge camera should perceive, respectively.}}

To transform the raw priority scores into a usable format within the system, we apply a softmax function, which normalizes these scores into a set of weights summed to one:
\begin{equation}
w_k = \frac{e^{p_k}}{\sum_{j=1}^{K} e^{p_j}},
\end{equation}
where \( w_k \) signifies the priority weight for camera \(k\). This method ensures that cameras which are more critical, either due to high coverage or due to lower delays, are given priority, thereby enhancing the decision-making capabilities and responsiveness of the edge analytics system.

\section{Problem Formulation}
\label{sec:Problem Formulation}
In this section, we establish the theoretical foundation for our PIB framework. We begin by detailing the IB analysis to determine the optimal balance between data compression and relevant information retention. Following this, we formulate the combinatorial multi-armed bandit (CMAB) problem to model the decision-making process of cameras in a distributed environment.

\subsection{Prioritized Information Bottleneck Analysis}
\label{sec:Prioritized Information Bottleneck Analysis}
In the context of information theory, the IB method seeks an optimal trade-off between the compression of an input variable $X$ and the preservation of relevant information about an output variable $Y$\cite{tishby2000information}. Throughout this paper, upper-case letters (e.g., \( X \), \( Y \), and \( Z \)) represent random variables, while lower-case letters (e.g., \( x \), \( y \), and \( z \)) denote their realizations. We formalize the input data from camera \( k \) as \( X^{(k)} \), and the target prediction as \( Y \), corresponding to the population in the dataset \( \mathcal{D} \). The goal is to encode \( X^{(k)} \) into a meaningful and concise representation \( Z^{(k)} \), which aligns with the hidden representation \( z^{(k)} \) that captures task-relevant features of multi-view content for prediction tasks. The classical IB problem can be formulated as a constrained optimization task:
\begin{equation}
\begin{aligned}
\max_{\Theta} \quad & \sum_{k=1}^{K}{I}\left(Z^{(k)};Y\right)\\
\text{s.t.} \quad  &{I}\left(X^{(k)};Z^{(k)}\right) \leq I_c,\quad (k=1,2,\cdots,K),
\end{aligned}
\end{equation}
where $I(Z^{(k)}, Y)$ denotes the mutual information between two random variables $ Z^{(k)} $ and $ Y $. $\Theta$ represents the set of all learnable parameters in the PIB framework, including ${\Theta_{M}}$ and the variational approximation in the following section. The mutual information is essentially a measure of the amount of information obtained about one random variable through the other random variable. $ I_c $ is the maximum permissible mutual information that $ Z^{(k)} $ can contain about $ X^{(k)}$. The objective is to ensure that $ Z^{(k)} $ captures the most relevant information about $X^{(k)}$ for predicting $ Y $ while remaining as concise as possible. By introducing a Lagrange multiplier\footnote{All Lagrange multipliers $\lambda$ are the same, and we only use trainable weight parameters to dynamically balance between accuracy and communication bottleneck.} $ \lambda $, the problem is equivalently expressed as:
\begin{equation}
\max_{\Theta} \quad R_{IB} =  \sum_{k=1}^{K}\left[{I}\left(Z^{(k)};Y\right) - \lambda {I}\left(X^{(k)};Z^{(k)}\right)\right],
\end{equation}
where $ R_{IB} $ represents the IB functional, balancing the compression of $ X^{(k)} $ against the necessity of accurately predicting $ Y $. Next, we extend the IB framework to a multi-camera setting by introducing priority weights to the mutual information terms, adapting the optimization problem as follows:
\begin{equation}\label{eq:IB}
\min_{\Theta} \quad  \sum_{k=1}^{K} \left[ - I_{w}\left(Z^{(k)};Y\right) + \lambda I_{w}\left(X^{(k)};Z^{(k)}\right) \right],
\end{equation}
where the weighted mutual information terms are defined as follows:
\begin{equation}
\begin{cases}
	I_w\left(Z^{(k)};Y\right)=w_k\cdot I\left(Z^{(k)};Y\right),\\
	I_w\left(X^{(k)};Z^{(k)}\right)=e^{w^0-w_k}\cdot I\left(X^{(k)};Z^{(k)}\right),\\
\end{cases}
\end{equation}
where the non-negative value \( w^0 \) represents the maximum allowable weight for \( w_k \). The first term with linear weights \( I_w\left(Z^{(k)};Y\right) = w_k I\left(Z^{(k)};Y\right) \) is the weighted mutual information between the compressed representation \( Z^{(k)} \) from camera \( k \) and the target \( Y \). This term can also be used to capture the semantic compression in raw data. The linear weighting with \( w_k \) ensures the influence of each camera is proportional to its priority weight. Higher \( w_k \) values increase the weight given to \( I\left(Z^{(k)};Y\right) \) in the objective function, emphasizing cameras that provide high-quality data for accurate target prediction.

{\color{black}{The second term with negative exponential weights \( I_w\left(X^{(k)};Z^{(k)}\right) = e^{(w^0-w_k)} \cdot I\left(X^{(k)};Z^{(k)}\right) \) denotes the mutual information between the original data \( X^{(k)} \) and its compressed form \( Z^{(k)} \), scaled by a negative exponential function of \( w_k \). This ensures an exponential decay in the influence of \( I\left(X^{(k)};Z^{(k)}\right) \) as \( w_k \) increases. Cameras with lower priority weights (lower \( w_k \)) undergo more aggressive data compression (as \( e^{w^0-w_k} \) is greater for smaller \( w_k \) values), optimizing bandwidth and storage usage without significantly affecting overall performance. In this paper, we use this type of weighting for the proof of concept study and will investigate more general weighting in the future.}}

\subsection{Combinatorial Multi-armed Bandit (CMAB) Problem}
\label{sec:CMAB Problem Formulation}
In dynamic environments with varying channel states and regions of interest (ROIs), ensuring high inference accuracy is challenging. The system must adaptively determine whether each camera should transmit its features and decide which edge server to use for transmission. Moreover, due to the edge server's limited bandwidth and computing capacity, each camera must decide if it should transmit directly to an edge server or use another edge server as a relay node before data fusion at the final edge server.

Accordingly, the problem can be formulated as a combinatorial multi-armed bandit (CMAB) problem. Each camera's connection establishment and edge server's connection establishment are base arms, and the collective actions of all agents constitute a super arm. Let \( a_k(t)\in \left\{ \mathcal{E}_{k,{s}}^{c\rightarrow e},\mathcal{E}_{{s}}^{e\rightarrow e_0} \right\} \) represent the action taken by camera \( k \) at time \( t \), where \(\mathcal{E}_{k,s}^{c \rightarrow e}\) denotes the connection between the \(k\)-th camera and the \(s\)-th edge server, and \(\mathcal{E}_{{s}}^{e \rightarrow e_0}\) denotes the connection between the \({s}\)-th edge server and the \({s}_0\)-th edge server for data fusion. The super arm is a subset of arms selected for the decision to transmit (the \(s\)-th edge server) and data fusion (the \(s_0\)-th edge server)\footnote{We omit ``$(t)$'' for simplicity in the definition of connection establishment.}. Dynamic channel state and ROI impact inference accuracy. This metric can be defined using the change in Multiple Object Detection Accuracy (MODA). MODA is calculated as \(M = 1 - \frac{FN + FP}{TP + FN}\), where \(TP\) denotes the number of correctly detected objects, \(FN\) means the number of missed detections, and \(FP\) is the number of false detections. Specifically, the gain in MODA from adding the \(k\)-th camera's feature to the ego camera's\footnote{The ego edge camera is the reference camera selected for data fusion. It typically has the highest number of detected moving objects (\(\chi_k\)) to provide the most comprehensive feature set for accurate object detection.} feature can be expressed as:
\begin{equation}
\label{eq:gain_in_MODA}
\Delta M_k = M_{\mathcal{C}_a \cup \{k\}} - M_{\mathcal{C}_a},
\end{equation}
where \(\mathcal{C}_a\) represents the set of cameras already selected, and \(M_{\mathcal{C}_a \cup \{k\}}\) represents the MODA score when the \(k\)-th camera is added to the set \(\mathcal{K}\).
To incorporate submodularity\footnote{The submodularity of $r_{a}(t)$ can be proven in Appendix C of \cite{fang2024pacp}.}, the reward function needs to reflect the diminishing returns property. Therefore, we define the reward function \(r_{\mathcal{K}}(t)\) as:
\begin{equation}
\label{eq:reward_function}
r_{a}(t) = \sum_{k \in \mathcal{K}} \Delta M_k,
\end{equation}
where \(\Delta M_k = M_{\mathcal{K} \cup \{k\}} - M_{\mathcal{K}}\), and \(\mathcal{K}\) is the set of cameras selected at time \(t\). The computational cost of multi-camera fusion and inference at the edge server \(s\) is denoted as \(\Psi^F_s\). The computational cost of simply forwarding features from one edge server to another is denoted as \(\Psi^T_s\). The remaining computing capacity of the \(k\)-th edge server is \(\Psi^R_k\). \(\mathcal{T}_{k,s}^{c \rightarrow e}\) denotes the latency of the transmission between the \(k\)-th camera and the \(s\)-th edge server, and \(\mathcal{T}_{{s}, {s}_0}^{e \rightarrow e}\) denotes the latency of the transmission between the aggregate node and the \(s\)-th edge server. Therefore, the CMAB problem can be formulated as:
\begin{small}
\begin{equation}
\label{eq:optimization_problem}
\begin{aligned}
&\underset{a_k(t)}{\max}\quad \sum_{t=1}^T{\mathbb{E}}[r_{a} (t)] \\
\quad \text{s.t.} \quad &(\ref{eq:optimization_problem}\textrm{a}): \quad {K}_{\min}\leq|\mathcal{K}|\leq{K}_{\max},\\
&(\ref{eq:optimization_problem}\textrm{b}): \quad \Psi^F\leq \Psi^R_{s_0},\\
&(\ref{eq:optimization_problem}\textrm{c}): \quad \sum_{s\in \mathcal{S}}{\mathcal{E}_{k,s}^{c\rightarrow e}}\cdot \mathcal{E}_{s,s_0}^{e\rightarrow e}=1, \forall k \in \mathcal{K},\\
&(\ref{eq:optimization_problem}\textrm{d}): \quad 0 \leq  \sum_{k \in \mathcal{K}} \mathcal{E}_{k, {s}}^{c \rightarrow e} \leq \mathcal{E}_{\max}^{c \rightarrow e},\forall s \in \mathcal{S},\\
&(\ref{eq:optimization_problem}\textrm{e}): \quad  \mathcal{E}_{k, {s}}^{c \rightarrow e} \mathcal{E}_{{s}}^{e \rightarrow e_0} \left( \mathcal{T}_{k, {s}}^{c \rightarrow e} + \mathcal{T}_{{s}, {s}_0}^{e \rightarrow e} \right) \leq \mathcal{T}^U, \forall k \in \mathcal{K},
\end{aligned}
\end{equation}
\end{small}
where (\ref{eq:optimization_problem}\textrm{a}) ensures the number of selected cameras falls within the specified range and $|\mathcal{K}|=\sum_{k\in \mathcal{K}}{\mathcal{E}_{k,s}^{c\rightarrow e}}$, (\ref{eq:optimization_problem}\textrm{b}) ensures that the remaining computing capacity of the aggregate server ($s_0$) chosen for inference is no less than the required capacity for fusion, (\ref{eq:optimization_problem}\textrm{c}) ensures that each camera uses a unique transmission connection, (\ref{eq:optimization_problem}\textrm{d}) ensures that the number of connections established by a single edge server does not exceed the maximum allowable connections, and (\ref{eq:optimization_problem}\textrm{e}) ensures that the total latency for any transmission path is within the allowable time limit $\mathcal{T}^U$ for all edge servers ${s} \in \mathcal{S}$. 

Solving the CMAB problem in multi-camera collaborative perception is challenging for traditional optimization methods due to: 1) \textbf{Dynamic environment}: Constantly changing channel states and ROIs make real-time adaptation difficult. 2) \textbf{Computational complexity}: The problem's combinatorial nature creates a massive solution space. 3) \textbf{Decentralized decision}: Independent yet collaborative decisions by multiple cameras and edge servers require a decentralized approach. Therefore, we employ distributed online learning techniques to address the CMAB problem in Sec. \ref{sec:Gate Mechanism Based on Distributed Online Learning}, allowing the system to learn and adapt dynamically, solve efficiently, and make decentralized decisions.


\section{Methodology}
\label{sec:Methodology}
In this section, we first introduce the overview of the proposed encoder/decoder architecture. Then, we propose the variational approximation method to reduce the computational complexity of estimating the mutual information during the minimization of Eq. (\ref{eq:IB}) in Sec. \ref{sec:Variational Approximation Method}. In Sec. \ref{sec: MFCM}, we design a multi-frame correlation model that utilizes variational approximation to capture the temporal correlation in video sequences. In Sec. \ref{sec:Network Loss Functions Derivation}, we derive the loss functions for the PIB-based encoder and decoder. Sec. \ref{sec:Gate Mechanism Based on Distributed Online Learning} proposes a gate mechanism based on distributed online learning to address the CMAB problem.

\subsection{Architecture Summary}
In this subsection, we outline the workflow of our PIB framework, designed for collaborative edge video analytics. As depicted in Fig. \ref{fig:encoder}, the process starts with each edge camera (denoted by $k$) capturing raw video data $X_t^{(k)}$ and extracting feature maps. These cameras utilize priority weights $w_k$ to optimize the balance between communication costs and perception accuracy, adapting to varying channel conditions. The extracted features are then compressed using entropy coding and sent as a bitstream to the edge server for further processing. On the server (see Fig. \ref{fig:decoder}), the video features are reconstructed using the shared parameters such as weights $w_k$ and the variational model parameters $q(Z_t^{(k)}|Z_{t-1}^{(k)}, ..., Z_{t-\tau}^{(k)})$. The server integrates these multi-view features to estimate $Y_t$, such as pedestrian occupancy and object detection. This approach leverages historical frame correlations through a multi-frame correlation model to enhance prediction accuracy. {\color{black}{The DNN architecture of the PIB framework is detailed in Appendix \ref{appendix:architecture}.}}

\begin{figure*}[t]
  \centering
  \begin{minipage}{0.48\textwidth}
    \centering
    \includegraphics[width=\textwidth]{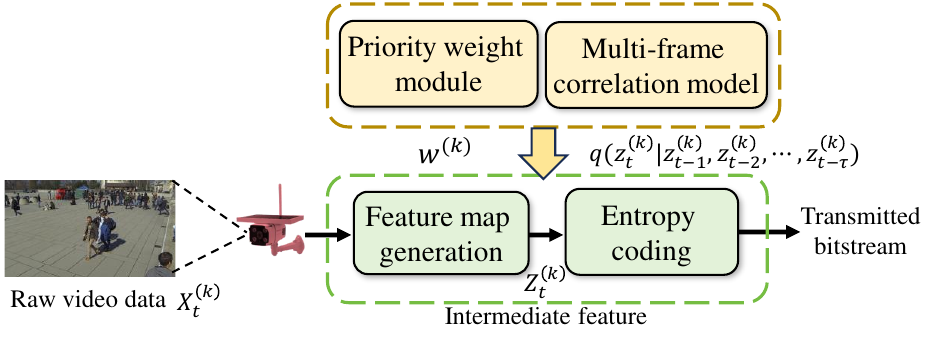}
    \caption{The procedure of video encoding.}
    \label{fig:encoder}
  \end{minipage}
  \hfill
  \begin{minipage}{0.48\textwidth}
    \centering
    \includegraphics[width=\textwidth]{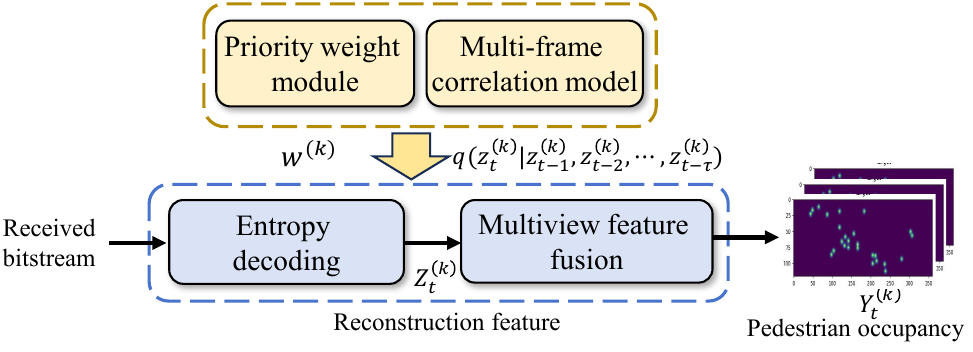}
    \caption{The procedure of video decoding.}
    \label{fig:decoder}
  \end{minipage}
\end{figure*}

\subsection{Variational Approximation Method}
\label{sec:Variational Approximation Method}
The objective function of information bottleneck in Eq. (\ref{eq:IB}) can be divided into two parts. The first part is $-\sum_{k=1}^K w_k\cdot I\left(Z^{(k)};Y\right)$, which denotes the quality of video reconstruction by decoding at an edge server. The second part is $\lambda \sum_{k=1}^{K} e^{w^0-w_k}\cdot I\left(X^{(k)};Z^{(k)}\right)$, which denotes the compression efficiency for feature extraction. As it has been shown in the way a decoder works, $p\left(Y|Z^{\left(k\right)}\right)$ can be any valid type of conditional distributions, but most often it is not feasible enough for straightforward calculation. Because of this complexity, it is highly challenging to directly compute the two mutual information components in Eq. (\ref{eq:IB}). 

As for the first part, we adopt the variational approach\cite{alemi2016deep}. This approach suggests that the decoder is part of a simpler group of distributions called $Q$. We then search for a distribution $q\left(Y|Z^{\left(k\right)};\Theta _{d}^{(k)}\right)$ within this group that is most similar to the best possible decoder distribution, using the KL-divergence to measure the closeness. $\Theta _{d}^{(k)}$ is a learnable parameter. Because computing the high-dimensional integrals in the posterior is infeasible, we substitute the optimal inference model with a variational approximation. Thus, we obtain the lower bound of $I_w\left( Z^{\left( k \right)};Y^{\left( k \right)} \right) =w_k\cdot I\left( Z^{\left( k \right)};Y^{\left( k \right)} \right) \geq w_k\left\{\mathbb{E}_{p\left( Y,Z \right)}\left[ \log q\left( Y^{\left( k \right)}|Z^{\left( k \right)};\Theta _{d}^{\left( k \right)} \right) \right] +H\left( Y^{\left( k \right)} \right)\right\}$, as established in Proposition \ref{proposition: lower_bound_mutual_information}.

\begin{proposition}
\label{proposition: lower_bound_mutual_information}
The probabilistic model of decoder \( p(Y|Z) \) maps a representation \( Z \in \mathbb{Z} \) into task inference \( Y \in \mathbb{Y} \). Let \( q(Y|Z) \) denote the variational approximation of decoder \( p(Y|Z) \). We can obtain
\begin{equation}\label{ineq: lower_bound_mutual_information}
I(Z;Y) \geq \mathbb{E}_{p(Y,Z)}[\log q(Y|Z)] + H(Y).
\end{equation}
\end{proposition}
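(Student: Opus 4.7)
The plan is to recover a by-now standard variational lower bound on mutual information (the Barber--Agakov bound used in the deep variational information bottleneck of Alemi \textit{et al.}) by inserting the variational approximation $q(Y|Z)$ into the definition of $I(Z;Y)$ and then appealing to the non-negativity of the Kullback--Leibler divergence. Because Proposition~\ref{proposition: lower_bound_mutual_information} is stated per-camera (the weight $w_k$ has already been peeled off before the statement), I would drop the superscript $(k)$ throughout and prove the scalar inequality; the weighted form quoted just before the proposition then follows by multiplying both sides of \eqref{ineq: lower_bound_mutual_information} by the non-negative constant $w_k$.

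First I would expand $I(Z;Y)$ using the standard identity
\begin{equation*}
I(Z;Y) \;=\; \mathbb{E}_{p(Y,Z)}\!\left[\log \frac{p(Y|Z)}{p(Y)}\right]
     \;=\; \mathbb{E}_{p(Y,Z)}[\log p(Y|Z)] \,+\, H(Y),
\end{equation*}
which isolates the intractable posterior $p(Y|Z)$ from the entropy $H(Y)$, a term that depends only on the data distribution and is constant with respect to the learnable parameters $\Theta_d$. Next I would introduce the variational decoder $q(Y|Z)$ via the algebraic identity $\log p(Y|Z) = \log q(Y|Z) + \log\!\bigl(p(Y|Z)/q(Y|Z)\bigr)$ and take expectations under $p(Y,Z)$, yielding
\begin{equation*}
I(Z;Y) \;=\; \mathbb{E}_{p(Y,Z)}[\log q(Y|Z)] \,+\, \mathbb{E}_{p(Z)}\!\bigl[D_{\mathrm{KL}}\!\left(p(Y|Z)\,\|\,q(Y|Z)\right)\bigr] \,+\, H(Y).
\end{equation*}
Finally, invoking Gibbs' inequality $D_{\mathrm{KL}}(\cdot\|\cdot)\geq 0$ (with equality iff $q=p$ almost everywhere) immediately gives the desired bound
\begin{equation*}
I(Z;Y) \;\geq\; \mathbb{E}_{p(Y,Z)}[\log q(Y|Z)] \,+\, H(Y).
\end{equation*}

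Honestly there is no ``hard part'' here: the proof is three lines of bookkeeping plus non-negativity of KL. The only care needed is notational: one must be explicit that the outer expectation in the KL term is over the marginal $p(Z)$ induced by the encoder, and that $H(Y)$ is a constant that can safely be dropped when the bound is used as a surrogate objective for training $\Theta_d$ (this justification is what makes the proposition useful in Sec.~\ref{sec:Network Loss Functions Derivation}, since maximising the variational likelihood $\mathbb{E}[\log q(Y|Z)]$ is equivalent to tightening the lower bound on $I(Z;Y)$). I would close the proof by remarking that the bound is tight exactly when the variational family $\mathcal{Q}$ contains the true posterior $p(Y|Z)$, motivating the use of sufficiently expressive neural parameterisations for $q(Y|Z;\Theta_d)$.
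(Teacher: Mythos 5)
Your proposal is correct and follows essentially the same route as the paper: both decompose $I(Z;Y)=\mathbb{E}_{p(Y,Z)}[\log p(Y|Z)]+H(Y)$ and then discard the non-negative Kullback--Leibler divergence $D_{\mathrm{KL}}\left(p(Y|Z)\,\|\,q(Y|Z)\right)$ to obtain the bound. Your version merely states the exact identity (with the KL remainder written out under $\mathbb{E}_{p(Z)}$) before dropping it, which is a slightly cleaner packaging of the same argument.
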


\begin{proof}
We start with the standard definition of mutual information:
\begin{equation}
\begin{aligned}
I(Z;Y) = \mathbb{E}_{p(Y,Z)} \left[ \log \frac{p(Y,Z)}{p(Y)p(Z)} \right] = \mathbb{E}_{p(Y,Z)} \left[ \log \frac{p(Y|Z)}{p(Y)} \right],
\end{aligned}
\end{equation}
which utilizes the relationship \( p(Y, Z) = p(Y|Z)p(Z) \) to express the mutual information in terms of the ratio of the conditional probability to the marginal probability of \( Y \).

Introducing the Kullback-Leibler (KL) divergence, which measures how the distribution \( q(Y|Z) \) approximates the true distribution \( p(Y|Z) \), we have:
\begin{equation}
\label{eq:KL}
D_{KL} \left[ p(Y|Z) \parallel q(Y|Z) \right] = \mathbb{E}_{p(Y|Z)} \left[ \log \frac{p(Y|Z)}{q(Y|Z)} \right] \geq 0,
\end{equation}
where the KL divergence is always non-negative. This leads to:
\begin{equation}
\mathbb{E}_{p(Y|Z)} \left[ \log p(Y|Z) \right] \geq \mathbb{E}_{p(Y|Z)} \left[ \log q(Y|Z) \right],
\end{equation}
which can be simplified to:
\begin{equation}
\mathbb{E}_{p(Y,Z)} \left[ \log p(Y|Z) \right] \geq \mathbb{E}_{p(Y,Z)} \left[ \log q(Y|Z) \right].
\end{equation}

Therefore, we can derive the lower bound for the mutual information as follows:
\begin{equation}
\begin{aligned}\label{eq: IZY}
I(Z;Y) &= \mathbb{E}_{p(Y,Z)} \left[ \log \frac{p(Y|Z)}{p(Y)} \right] \\
       &\geq \mathbb{E}_{p(Y,Z)} [\log q(Y|Z)] - \mathbb{E}_{p(Y)} [\log p(Y)] \\
       &= \mathbb{E}_{p(Y,Z)} [\log q(Y|Z)] + H(Y),
\end{aligned}
\end{equation}
where \( H(Y) \) is the entropy of \( Y \), a constant that reflects the inherent uncertainty in \( Y \) independent of \( Z \).
{\hfill $\blacksquare$\par}
\end{proof}

To establish an upper bound for the term \(\lambda \sum_{k=1}^{K} e^{w_0-w_k} \cdot I(X^{(k)}; Z^{(k)})\) in the context of the complexity in directly minimizing it, we proceed as follows. Recognizing that \(H(Z^{(k)} | X^{(k)}) \ge 0\) from the properties of entropy, we obtain the inequality:
\begin{equation}
\begin{aligned}\label{eq:hz}
&\lambda \sum_{k=1}^K{I_w}\left( X^{(k)};Z^{(k)} \right) =\lambda \sum_{k=1}^K{\left[ \frac{H\left( Z^{(k)} \right) -H\left( Z^{(k)}|X^{(k)} \right)}{e^{w_k-w_0}} \right]} \\
&\le \lambda \sum_{k=1}^K{\frac{H\left( Z^{(k)} \right)}{e^{w_k-w_0}}}\le \lambda \sum_{k=1}^K{\frac{H\left( Z^{(k)}, V^{(k)} \right)}{e^{w_k-w_0}}},
\end{aligned}
\end{equation}
where we use the latent variables \(V^{(k)}\) as the side information to encode the quantized feature and we have used \(H(Z^{(k)}, V^{(k)}) \geq H(Z^{(k)})\). The joint entropy \(H(Z^{(k)}, V^{(k)})\) represents the communication cost, which is minimized when the joint entropy is minimized.

\begin{proposition}
\label{proposition: upper_bound_mutual_information}
The upper bound for the mutual information term in Eq. \ref{eq:IB} is given by:
\begin{equation}
\begin{aligned}\label{ineq:second_part_final}
I_w\left(X^{(k)};Z^{(k)}\right) \leq & \ \mathbb{E}_{p(Z^{(k)}, V^{(k)})} \left[ -\log q\left(Z^{(k)}|V^{(k)}; \Theta_{con}^{(k)}\right) \right. \\
& \left. \times q(V^{(k)}; \Theta_{l}^{(k)}) \right] e^{w_0-w_k},
\end{aligned}
\end{equation}
where \(q\left(Z^{(k)}|V^{(k)}; \Theta_{con}^{(k)}\right)\) is the variational distribution conditioned on the latent variables \(V^{(k)}\) with parameters \(\Theta_{con}^{(k)}\), and \(q(V^{(k)}; \Theta_{l}^{(k)})\) is the marginal variational distribution with parameters \(\Theta_{l}^{(k)}\).
\end{proposition}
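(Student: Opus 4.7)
The plan is to inherit the chain of inequalities already set up in Eq.~(\ref{eq:hz}) and then bound the remaining joint entropy by a cross-entropy with the chosen variational surrogates via a standard Gibbs-inequality argument. Concretely, Eq.~(\ref{eq:hz}) has already established that
\[
I_w\!\left(X^{(k)};Z^{(k)}\right) \;\le\; e^{w_0-w_k}\, H\!\left(Z^{(k)}, V^{(k)}\right),
\]
so the entire remaining task is to replace the intractable true distribution $p(Z^{(k)},V^{(k)})$ inside the joint entropy with the tractable product $q(Z^{(k)}\!\mid\! V^{(k)};\Theta^{(k)}_{con})\,q(V^{(k)};\Theta^{(k)}_{l})$, paying only a non-negative KL gap.

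The core step I would carry out is to factor $p(Z^{(k)},V^{(k)}) = p(Z^{(k)}\!\mid\! V^{(k)})\,p(V^{(k)})$ and write $H(Z^{(k)},V^{(k)}) = -\mathbb{E}_{p(Z^{(k)},V^{(k)})}\!\left[\log p(Z^{(k)}\!\mid\! V^{(k)}) + \log p(V^{(k)})\right]$. Then, exactly mirroring the manipulation used around Eq.~(\ref{eq:KL}) in the proof of Proposition~\ref{proposition: lower_bound_mutual_information}, I would invoke non-negativity of KL divergence twice: once for the conditional, $D_{KL}\!\left[p(Z^{(k)}\!\mid\! V^{(k)})\,\|\,q(Z^{(k)}\!\mid\! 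V^{(k)};\Theta^{(k)}_{con})\right]\ge 0$, and once for the marginal, $D_{KL}\!\left[p(V^{(k)})\,\|\,q(V^{(k)};\Theta^{(k)}_{l})\right]\ge 0$. Adding these two inequalities and moving the expectation inside yields
\[
H\!\left(Z^{(k)}, V^{(k)}\right) \;\le\; \mathbb{E}_{p(Z^{(k)},V^{(k)})}\!\left[-\log\! \left(q(Z^{(k)}\!\mid\! V^{(k)};\Theta^{(k)}_{con})\,q(V^{(k)};\Theta^{(k)}_{l})\right)\right].
\]

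Finally, I would multiply this cross-entropy bound through by the scalar $e^{w_0-w_k}$, which (being deterministic in $k$) passes transparently outside the expectation, producing exactly the claimed inequality~(\ref{ineq:second_part_final}). The main obstacle is not analytical depth but careful bookkeeping: keeping the expectation under the true joint $p(Z^{(k)},V^{(k)})$ rather than accidentally switching to a variational measure, and treating the product $q(Z^{(k)}\!\mid\! V^{(k)};\Theta^{(k)}_{con})\,q(V^{(k)};\Theta^{(k)}_{l})$ as a single joint surrogate so that the two KL gaps add cleanly into one cross-entropy term. No new probabilistic machinery beyond the chain rule and Gibbs' inequality is required, so the proof should conclude in only a few lines once these two observations are in place.
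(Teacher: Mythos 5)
Your proposal is correct and takes essentially the same route as the paper: both start from the already-established bound $I_w\left(X^{(k)};Z^{(k)}\right)\leq e^{w_0-w_k}H\left(Z^{(k)},V^{(k)}\right)$ in Eq.~(\ref{eq:hz}) and then upper-bound the joint entropy by the cross-entropy under the surrogate $q\left(Z^{(k)}|V^{(k)};\Theta_{con}^{(k)}\right)q\left(V^{(k)};\Theta_{l}^{(k)}\right)$ using non-negativity of the KL divergence. The only cosmetic difference is that you decompose that single joint KL term (which the paper writes as $D_{KL}\left[p\left(Z^{(k)},V^{(k)}\right)\parallel q\left(Z^{(k)}|V^{(k)}\right)q\left(V^{(k)}\right)\right]\geq 0$) into a conditional and a marginal KL via the chain rule; the two pieces sum to the paper's single term, so the arguments are equivalent.
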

\begin{proof}
The proof begins by recognizing that the joint entropy \(H(Z^{(k)}, V^{(k)})\) represents the communication cost, which is minimized when the joint entropy is minimized. The joint entropy can be expressed as the expectation over the logarithm of the ratio of the true joint distribution \(p(Z^{(k)}, V^{(k)})\) to the variational distribution \(q(Z^{(k)}|V^{(k)}; \Theta_{con}^{(k)}) \cdot q(V^{(k)}; \Theta_{l}^{(k)})\), where \(\Theta_{con}^{(k)}\) represents the learnable parameter:
\begin{equation}
\begin{aligned}\label{eq:joint_entropy}
H(Z^{(k)}, V^{(k)}) = &\mathbb{E}_{p(Z^{(k)}, V^{(k)})} \left[ -\log q(Z^{(k)}|V^{(k)}; \Theta_{con}^{(k)}) \right. \\
& \left. -\log q(V^{(k)}; \Theta_{l}^{(k)}) \right] - D_{KL}(p||{q}),
\end{aligned}
\end{equation}
where \(D_{KL}(p||{q})\) is the KL-divergence between the distribution of \(p = p(Z^{(k)}, V^{(k)})\) and \({q} = q(Z^{(k)},V^{(k)})\). The KL-divergence is non-negative, thus we have:
\begin{equation}
\begin{aligned}\label{eq:DKL}
D_{KL}\left[ p\left( Z^{(k)},V^{(k)} \right) \parallel q\left( Z^{(k)}|V^{(k)}; \Theta_{con}^{(k)} \right) q\left( V^{(k)}; \Theta_{l}^{(k)} \right) \right] \ge 0,
\end{aligned}
\end{equation}
Combining the joint entropy equation (\ref{eq:joint_entropy}) with inequality (\ref{eq:DKL}), we obtain:
\begin{equation}
\begin{aligned}
H(Z^{(k)}, V^{(k)}) \leq & \ \mathbb{E}_{p(Z^{(k)}, V^{(k)})} \left[ -\log q(Z^{(k)}|V^{(k)}; \Theta_{con}^{(k)}) \right. \\
& \left. \times q\left(V^{(k)}; \Theta_{l}^{(k)}\right) \right].
\end{aligned}
\end{equation}
Thus, we can substitute Ineq. (\ref{eq:DKL}) into Ineq. (\ref{eq:hz}) to obtain the result in Ineq. (\ref{ineq:second_part_final}).

{\hfill $\blacksquare$\par}
\end{proof}

It should be noted that the lower bound in Ineq. (\ref{eq: IZY}) and upper bound in Ineq. (\ref{ineq:second_part_final}) enables us to establish an upper limit on the objective function in minimization problem in (\ref{eq:IB}). This makes it easier to minimize with the corresponding loss function during network training, as discussed in \mbox{Sec. \ref{sec:Network Loss Functions Derivation}}.
\subsection{Multi-Frame Correlation Model}
\label{sec: MFCM}
Inspired by the previous work \cite{shao2023task}, PIB framework utilizes a multi-frame correlation model to leverage variational approximation to capture the temporal dynamics in video sequences. This approach utilizes the temporal redundancy across contiguous frames to model the conditional probability distribution effectively. Our model approximates the next feature in the sequence by considering the variational distribution \( q(Z_{t}^{(k)} | Z_{t-1}^{(k)}, ..., Z_{t-\tau}^{(k)}; \Theta_{\tau}^{(k)}) \), which can be modeled as a Gaussian distribution aimed at mimicking the true conditional distribution of the subsequent frame given the previous frames:
\begin{equation*}
q\left(Z_{t}^{(k)}|Z_{t-1}^{(k)},...,Z_{t-\tau}^{(k)};\Theta _{\tau}^{(k)}\right)=\mathcal{N} \left( \mu \left( \Theta _{\tau}^{(k)} \right) ,\sigma ^2\left(\Theta _{\tau}^{(k)} \right) \right) ,
\end{equation*}
where \( \mu \) and \( \sigma^2 \) are parametric functions of the preceding frames, encapsulating the temporal dependencies. These functions are modeled using a deep neural network with parameters \( \Theta_{\tau}^{(k)} \) learned from data. By optimizing the variational parameters, our model aims to closely match the true distribution, thus encoding the features more efficiently.

\subsection{Network Loss Functions Derivation}
\label{sec:Network Loss Functions Derivation}

In this subsection, we formulate our network loss functions to enhance the information transmission in a multi-camera scenario based on the priority-driven mechanism and the IB principle as discussed in Sec. \ref{sec: Priority Weight Formulation} and Sec. \ref{sec:Prioritized Information Bottleneck Analysis}.

Given the variability in channel quality and the occurrence of delays, we introduce the first loss function, \(\mathcal{L}^{(k)}_1\), designed to minimize the impact of unreliable data sources while maximizing inference accuracy. We also consider to improve the number of perceived moving objects (\({\chi}_{\text{norm}}\)). Thus, the loss function of the MLP network in Sec. \ref{sec: Priority Weight Formulation} is:
\begin{equation}\label{eq:L1}
\mathcal{L} _1=\sum_{k=1}^K{\left[ 1_{d_{\mathrm{norm},k}<\epsilon}\frac{\left( w_k-W_{\mathrm{target}} \right)^2}{\chi _{\mathrm{norm},k}}+1_{d_{\mathrm{norm},k}>\epsilon}\left( w_{k}^{2} \right) \right]},
\end{equation}
where \(\epsilon\) denotes a permissible delay that cannot result in errors in multi-view fusion, and  \(W_{\mathrm{target}}\) represents the target weight for a camera without excessive delay.
The second loss function \(\mathcal{L}_2\) aims to minimize the upper bound of the mutual information, following the inequalities derived in (\ref{eq: IZY}) and (\ref{ineq:second_part_final}). \(\mathcal{L}_2\) ensures efficient encoding while preserving essential information for accurate prediction:
\begin{equation}
\begin{aligned}\label{eq:L2}
\mathcal{L}_2 =& \sum_{k=1}^K{\underset{\mathrm{The}\ \mathrm{upper} \ \mathrm{bound} \ \mathrm{of} \ -I_w\left( Z^{(k)};Y \right)}{\underbrace{\mathbb{E} [-w_k\log q(Y|Z^{(k)} ;\Theta _{d}^{\left( k \right)})]}}} + \lambda  \cdot \min \bigg\{ R_{max}, \\
& \underset{\mathrm{The} \ \mathrm{upper} \ \mathrm{bound} \ \mathrm{of} \ I_w\left( X^{(k)};Z^{(k)} \right)}{\underbrace{\mathbb{E} \left[ -\log q(Z^{(k)}|V^{(k)};\Theta _{con}^{(k)})\cdot q(V^{(k)};\Theta _{l}^{(k)}) \right]e^{(w^0-w_k)} }} \bigg\}.
\end{aligned}
\end{equation}
The first term of $\mathcal{L}_2$ ignores $H(Y)$ in Ineq. (\ref{ineq: lower_bound_mutual_information}) because it is a constant. $R_{max}$ represents the penalty for the excessive communication cost of the variation approximation $q(Z^{(k)}|V^{(k)}; \Theta_{con}^{(k)}) \cdot q(V^{(k)}; \Theta_{l}^{(k)})$, which captures the degradation of training decoder $p(Y|Z^{(k)})$. In Sec. \ref{sec: MFCM}, the Multi-Frame Correlation Model leverages temporal dynamics, which is critical for sequential data processing in video analytics. The third loss function, \(\mathcal{L}^{(k)}_2\), is needed to minimize the KL divergence between the true distribution of frame sequences and the modeled variational distribution:
\begin{equation}\label{eq:L3}
\mathcal{L}_3 =  \sum_{k=1}^{K} D_{KL}\left[p(Z_t^{(k)} | Z_{<t}^{(k)}) || q(Z_t^{(k)} | Z_{<t}^{(k)};\Theta _{\tau}^{(k)})\right],
\end{equation}
where $Z_{<t}^{(k)}=(Z_{t-1}^{(k)},...,Z_{t-\tau}^{(k)})$. These loss functions collectively aim to optimize the trade-off between data transmission costs and perceptual accuracy, crucial for enhancing the performance of edge analytics in multi-camera systems. \mbox{ \ref{alg:variational-approximation}} introduces the detailed procedure of feature extraction and variational approximation.

\begin{algorithm}[t]
\caption{Training Procedures of the Feature Extraction and Variational Approximation}
\label{alg:variational-approximation}
\begin{algorithmic}[1]
\REQUIRE Training dataset, initialized parameters $\Theta_{d}^{(k)}$, $\Theta_{e}^{(k)}$, $\Theta_{con}^{(k)}$, $\Theta_{l}^{(k)}$, $\Theta_{\tau}^{(k)}$ for $k \in 1:K$, $\Theta_{M}$, $w^0$. 
\ENSURE Optimized parameters $\Theta_{e}^{(k)}$, $\Theta_{d}^{(k)}$, $\Theta_{con}^{(k)}$, $\Theta_{l}^{(k)}$ for $k \in 1:K$, and $\Theta_{M}$.
\REPEAT 
    \STATE Calculate the priority weights based on latency and sensing coverage of all cameras with parameter $\Theta_{M}$.
    \FOR{$k = 1$ to $K$}
        \STATE Extract the features by the feature extractor of camera $k$ with parameter $\Theta_{e}^{(k)}$.
        \STATE Compress the features based on the PIB framework with parameters $\Theta_{d}^{(k)}$, $\Theta_{e}^{(k)}$, $\Theta_{con}^{(k)}$, $\Theta_{l}^{(k)}$.
    \ENDFOR
    \STATE Compute the loss functions $\mathcal{L}_1$ and $\mathcal{L}_2$ in \mbox{Eqs. (\ref{eq:L1})-(\ref{eq:L2})}, respectively.
    \STATE Update parameters $\Theta_{d}^{(k)}$, $\Theta_{e}^{(k)}$, $\Theta_{con}^{(k)}$, $\Theta_{l}^{(k)}$ for $k \in 1:K$, and $\Theta_{M}$ through backpropagation.
\UNTIL Convergence of parameters $\Theta_{d}^{(k)}$, $\Theta_{e}^{(k)}$, $\Theta_{con}^{(k)}$, $\Theta_{l}^{(k)}$ for $k \in 1:K$, and $\Theta_{M}$.
\REPEAT 
    \FOR{$k = 1$ to $K$}
        \STATE Extract the features by the feature extractor of device $k$ with parameter $\Theta_{e}^{(k)}$.
        \STATE Compress the features based on the multi-frame correlation model with parameters $\Theta_{\tau}^{(k)}$.
        \STATE Compute the empirical estimation of the loss function $\mathcal{L}_3^{(k)}$ in Eq. (\ref{eq:L3}).
        \STATE Update parameters $\Theta_{\tau}^{(k)}$ through backpropagation.
    \ENDFOR
\UNTIL Convergence of parameters $\Theta_{\tau}^{(k)}$ for $k \in 1:K$.
\end{algorithmic}
\end{algorithm}

\subsection{Gate Mechanism Based on Distributed Online Learning}
\label{sec:Gate Mechanism Based on Distributed Online Learning}
The gate mechanism based on distributed online learning is designed to solve the CMAB problem (\ref{eq:optimization_problem}) formulated in Sec. \ref{sec:CMAB Problem Formulation}. In this subsection, we first introduce the intuitive ideas of gate mechanism. Then, we provide the details and explanation of the pseudocode. Finally, the evaluation of regret performance and communication cost for distributed execution is analyzed mathematically.
\subsubsection{Distributed Online Learning for CMAB Problem}
\label{sec:Distributed Online Learning for CMAB Problem}
Firstly, we propose a distributed Upper Confidence Bound (UCB) algorithm to address this problem, leveraging the independence of each camera agent to learn the optimal transmission strategy. This approach is particularly effective for managing the dynamic nature of the multi-camera network, where real-time channel quality and server load can significantly impact the overall system performance. Specifically, we assume that each arm represents the connection establishment between a camera and an edge server. The super arm $(\mathcal{E} _{k,{s}}^{c\rightarrow e},\mathcal{E} _{{s} ,{s} _0}^{e\rightarrow e_0})$ is the combination of these connections. The reward is defined based on the gain in MODA by adding the \(k\)-th camera's feature to the ego camera in \mbox{Eq. (\ref{eq:gain_in_MODA}).}

The intuitive idea behind using distributed UCB is to manage dynamic CSI and ROI efficiently. Each camera agent independently explores and exploits available edge servers based on local observations, making the system robust to changing network conditions. The algorithm has two phases: \textbf{exploration} and \textbf{exploitation}. In the exploration phase, each agent gathers information on potential rewards. In the exploitation phase, the agent selects the best action based on the UCB value, balancing exploration and exploitation under uncertainty. The UCB value for edge server \(s\) at time \(t\) is $\text{UCB}_{k,{s}}(t) = \hat{\mu}_{k,{s}}(t) + \alpha \sqrt{\frac{2 \ln t}{N_{k,{s}}(t)}}$, where \(\hat{\mu}_{k,{s}}(t)\) is the estimated reward, driving exploitation. The second term (\(\alpha \sqrt{\frac{2 \ln t}{N_{k,{s}}(t)}}\)) accounts for uncertainty, encouraging exploration\cite{10021290}. The key intuition is that actions that have been selected fewer times carry more uncertainty, so they are given a higher bonus to encourage exploration. Conversely, as an action is selected more often and its reward estimate becomes more reliable, the bonus decreases, leading to more exploitation of that action. The parameter \(\alpha\) balances how aggressively the algorithm explores uncertain actions versus exploiting known rewards. The square root component diminishes as more information is gathered, while the logarithmic term ensures the exploration bonus decreases slowly, encouraging exploration of less frequently chosen actions. Algorithm \ref{alg:DistUCBGate} provides the pseudocode for the proposed gate mechanism with distributed online learning.
\begin{algorithm}[t]
\caption{Gate Mechanism with Distributed Online Learning (DOL)}
\label{alg:DistUCBGate}
\begin{algorithmic}[1]
\STATE Initialize parameters: $\alpha$, $\beta$, $\gamma$
\FOR{each Camera $k = 1$ to $K$}
    \STATE Initialize reward estimates $\hat{\mu}_{k,{s}}(0) = 0$, action counts $N_{k,{s}}(0) = 0$, cumulative rewards $R_{k,{s}}(0) = 0$
\ENDFOR
\FOR{each time step $t = 1$ to $T$}
    \FOR{each Camera $k = 1$ to $K$}
        \STATE Update channel state $\text{CSI}_{k,{s}}(t)$ and edge server load $l_{{s}}(t)$ \label{alg-line: state update}
        \STATE Select edge server ${s}$ for fusion based on current state \label{alg-line: Select}
        \STATE Compute UCB value for each edge server ${s}$: \label{alg-line: ucb}
        \[
        \text{UCB}_{k,{s}}(t) = \hat{\mu}_{k,{s}}(t) + \alpha \sqrt{\frac{2 \ln t}{N_{k,{s}}(t)}}
        \]
        \STATE Select action $a_k(t) = \left\{ \mathcal{E} _{k,{s}}^{c\rightarrow e}, \mathcal{E} _{{s} ,{s} _0}^{e\rightarrow e_0} \right\}$ that maximizes UCB value
        \STATE Execute action $a_k(t)$ and observe reward $r_{k}(t)$ \label{alg-line: Execute action}
        
        \IF {constraints (\ref{eq:optimization_problem}\textrm{a}) or (\ref{eq:optimization_problem}\textrm{b}) or (\ref{eq:optimization_problem}\textrm{c}) or (\ref{eq:optimization_problem}\textrm{d}) or (\ref{eq:optimization_problem}\textrm{e}) are violated}
            \STATE Set $r_{k}(t) = 0$ \label{alg-line: set reward}
        \ENDIF

        \STATE Update action counts $N_{k,{s}}(t) = N_{k,{s}}(t-1) + 1$ \label{alg-line: update 1}
        \STATE Update cumulative rewards $R_{k,{s}}(t) = R_{k,{s}}(t-1) + r_{k}(t)$
        \STATE Update reward estimates $\hat{\mu}_{k,{s}}(t) = \frac{R_{k,{s}}(t)}{N_{k,{s}}(t)}$ \label{alg-line: update 3}   
    \ENDFOR
    \IF {Communication round is started} \label{alg-line: comm. start}
        \FOR{each Camera $k = 1$ to $K$}
            \STATE Aggregate rewards and action counts across agents
            \STATE Update global reward estimates $\hat{\mu}_{k,{s}}(t)$ for all edge servers ${s}$
        \ENDFOR
    \ENDIF \label{alg-line: comm. end}
\ENDFOR
\end{algorithmic}
\end{algorithm}


{\color{black}In Algorithm \ref{alg:DistUCBGate}, each camera agent independently updates its observed channel state and edge server load (Line \ref{alg-line: state update}). This enables decentralized learning of the optimal transmission strategy without centralized control. Specifically, in Lines \ref{alg-line: ucb}-\ref{alg-line: Execute action}, each agent computes the UCB value $\text{UCB}_{k,{s}}(t) = \hat{\mu}_{k,{s}}(t) + \alpha \sqrt{\frac{2 \ln t}{N_{k,{s}}(t)}}$ for each edge server \(s\), where \( \hat{\mu}_{k,{s}}(t) \) is the estimated reward and \( \alpha \) is used to adjust the balance between exploration and exploitation. By selecting the action with the highest UCB value, the agent optimally balances exploring the underutilized connections and exploiting the high-reward connections based on its local observations, guiding the system towards an optimal transmission strategy. Lines \ref{alg-line: update 1}-\ref{alg-line: update 3} handle updates to action counts, cumulative rewards, and reward estimates, refining the UCB values. Periodic communication rounds (Line \ref{alg-line: comm. start}) ensure consistency across agents by allowing each to share local reward estimates with a central server, which aggregates these data and updates global estimates for synchronization (Lines \ref{alg-line: comm. start}-\ref{alg-line: comm. end}). This process maintains synchronization across the distributed network while supporting scalability and adaptability.}

\subsubsection{Regret Analysis}
\label{sec:Regret Analysis}
The regret analysis reflects the efficiency and adaptability of the algorithm in optimizing network performance. A lower regret bound indicates that the algorithm performs close to the optimal strategy, ensuring high inference accuracy and efficient resource utilization despite the dynamic environment and varying network conditions. In the context of a multi-camera sensing network, the choices between different cameras and edge servers are interdependent. Furthermore, the connections between different cameras and edge servers exhibit heterogeneity, with each super arm having different distribution parameters. Therefore, we consider a CMAB problem with a non-identically distributed (non-i.i.d.) assumption and derive its regret upper bound in the following section.

The regret \( R_L(T) \) is an important metric to evaluate the performance of the online learning algorithm. The regret over a time horizon \( T \) is defined as the difference between the maximum expected reward obtainable by an optimal strategy and the expected reward obtained by the algorithm:
\begin{equation}
\begin{aligned}
R_L(T) = \sum_{t=1}^{T} \left( \max_{a} \mathbb{E}[r(a)] - \mathbb{E}[r(a_k(t))] \right),
\end{aligned}
\end{equation}
where \( a \) represents an action, and \( \mathbb{E}[r(a)] \) is the expected reward for action \( a \). To derive the regret bounds, we first establish a lemma using Bernstein's inequality in Lemma \ref{lemma:Bernstein}.
\begin{lemma}
\label{lemma:Bernstein}
(Bernstein's inequality) Let \( X_1, X_2, \ldots, X_T \) be independent random variables such that \( |X_t - \mathbb{E}[X_t]| \leq b \) almost surely. Then, for any \( \epsilon > 0 \),
\begin{equation}
\begin{aligned}
P\left( \left| \sum_{t=1}^{T} (X_t - \mathbb{E}[X_t]) \right| \geq \epsilon \right) \leq 2 \exp \left( -\frac{\epsilon^2}{2\sum_{t=1}^{T} \text{Var}(X_t) + \frac{2}{3}\epsilon} \right).
\end{aligned}
\end{equation}

\end{lemma}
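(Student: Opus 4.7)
The plan is to apply the Chernoff exponential moment method to $S_T = \sum_{t=1}^{T}(X_t - \mathbb{E}[X_t])$ and then symmetrize via a union bound. Write $Y_t = X_t - \mathbb{E}[X_t]$, so that $\mathbb{E}[Y_t] = 0$, $|Y_t|\le b$ almost surely, and $\sigma_t^2 := \mathrm{Var}(Y_t) = \mathrm{Var}(X_t)$, and set $\sigma^2 = \sum_{t=1}^T \sigma_t^2$. First I would establish the one-sided tail $P(S_T \ge \epsilon)$. For any $\lambda>0$, Markov's inequality applied to $e^{\lambda S_T}$ together with independence of the $Y_t$'s gives
\[
P(S_T \ge \epsilon) \;\le\; e^{-\lambda \epsilon}\prod_{t=1}^T \mathbb{E}\bigl[e^{\lambda Y_t}\bigr].
\]

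Second, the heart of the argument is a variance-sensitive bound on each moment generating function. I would Taylor-expand $e^{\lambda Y_t}=1+\lambda Y_t+\sum_{k\ge 2}(\lambda Y_t)^k/k!$, use $\mathbb{E}[Y_t]=0$, and apply the crude moment bound $\mathbb{E}[Y_t^k]\le b^{k-2}\sigma_t^2$ that follows from $|Y_t|\le b$. Combining this with the elementary inequality $k!\ge 2\cdot 3^{k-2}$ for $k\ge 2$ lets me sum the resulting geometric series and obtain
\[
\mathbb{E}\bigl[e^{\lambda Y_t}\bigr] \;\le\; \exp\!\Bigl(\tfrac{\sigma_t^2 \lambda^2/2}{1-\lambda b/3}\Bigr), \qquad 0<\lambda<3/b.
\]
Taking the product over $t$ and substituting, the exponent becomes $-\lambda\epsilon+\tfrac{\sigma^2\lambda^2/2}{1-\lambda b/3}$. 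Optimizing in $\lambda$ with the choice $\lambda^\star = \epsilon/(\sigma^2 + b\epsilon/3)$ (which lies in the admissible range since $\lambda^\star b<3$), the exponent collapses to $-\epsilon^2/(2\sigma^2+2b\epsilon/3)$, yielding the one-sided Bernstein tail.

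Third, for the lower tail I would apply the identical argument to the variables $-Y_t$, which inherit mean zero, the same variances, and the same uniform bound $b$. A union bound over the upper and lower tails introduces the factor of $2$ and completes the statement of the lemma.

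The main obstacle is the second step: producing the variance-sensitive denominator $1-\lambda b/3$ rather than a cruder sub-Gaussian envelope requires the term-by-term combinatorial inequality $k!\ge 2\cdot 3^{k-2}$ inside the Taylor series, and the subsequent one-dimensional optimization in $\lambda$ must be balanced carefully so that the ``variance'' contribution $2\sigma^2$ and the ``boundedness'' contribution $\tfrac{2}{3}b\epsilon$ both emerge cleanly in the denominator of the final exponent.
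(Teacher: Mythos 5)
Your proposal is correct, and it is essentially the canonical proof: the paper does not prove this lemma at all but simply cites Sec.~2.8 of Vershynin's textbook, and the Chernoff--MGF argument you outline (centering, the moment bound $\mathbb{E}[|Y_t|^k]\le b^{k-2}\sigma_t^2$, the inequality $k!\ge 2\cdot 3^{k-2}$ to sum the geometric series into the $1-\lambda b/3$ denominator, the optimizer $\lambda^\star=\epsilon/(\sigma^2+b\epsilon/3)$, and a union bound for the two-sided tail) is exactly the proof found there. All of your intermediate claims check out, including the admissibility $\lambda^\star b<3$ and the collapse of the exponent to $-\epsilon^2/(2\sigma^2+\tfrac{2}{3}b\epsilon)$. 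One point worth flagging: your (correct) derivation produces the denominator $2\sum_t \mathrm{Var}(X_t)+\tfrac{2}{3}b\epsilon$, whereas the lemma as stated in the paper writes $\tfrac{2}{3}\epsilon$ without the factor $b$; the stated form is therefore only valid when $b\le 1$, and your proof establishes the standard (and correctly scaled) version.
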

\begin{proof}
Please refer to Sec. 2.8 in \cite{vershynin2018high}. {\hfill $\blacksquare$\par}
\end{proof}
\begin{theorem}
\label{theorem:regret_bound}
In a dynamic environment with non-identically distributed (non-i.i.d) rewards due to network heterogeneity, where the variance of the reward for arm \(k\) is denoted as \(\sigma_{r_k}^2\), the cumulative regret \(R(T)\) over \(T\) rounds of the distributed UCB algorithm is bounded by:
\begin{equation}
\begin{split}
R(T) \leq O\Bigg( & \left( \sqrt{2\sum_{k=1}^{\mathcal{K}^{\text{arm}}}\sigma_{r_k}^2} + 2\mathcal{K}^{\text{arm}}\sqrt{\frac{2}{a_N}} \right)\mathcal{K}^{\text{arm}} \sqrt{T \ln T} \\
& + \frac{2\mathcal{K}^{\text{arm}}}{3}\ln T  \Bigg),
\end{split}
\end{equation}
where \( \mathcal{K}^{\text{arm}}=\sum_{k=K_{\min}}^{K_{\max}}{C_{K}^{k}S^{k+1}} \) represents the maximum number of arms in the optimal super arm, \(\sigma_{r_k}\) is the standard deviation of the reward for arm \(k\), and \(a_N\) is an upper bound on the linear growth rate of the number of times UCB algorithm arms are selected over time\footnote{It indicates that \(N_k(t)\) follows a linear growth trend, i.e., \(N_k(t) \approx a_N t\).}. The non-i.i.d. nature of the rewards represents the heterogeneity in the network where different arms can have different reward distributions due to varying network conditions, processing capabilities, and data qualities.
\end{theorem}

\begin{proof}
We start by defining the reward for camera \(k\) transmitting to edge server \(s\) at time \(t\) as \(r_{k,s}(t)\). The mean reward for this transmission is denoted by \(\mu_{k,s}\), and the variance of the reward is \(\sigma_{k,s}^2\). The total variance in rewards across all camera-edge server pairs is represented by \(\sigma_r^2\), which accounts for variability due to both the dynamic channel state information (CSI) and the edge server load fluctuations. To derive the regret bound, we use Bernstein's inequality to bound the sum of rewards for each camera-edge server pair. For any \(\epsilon > 0\), Bernstein's inequality gives the following probability bound:
\begin{equation}
\begin{aligned}\label{eq:bernstein}
P\left( \left| \sum_{t=1}^{T} (r_{k,s}(t) - \mu_{k,s}) \right| \geq \epsilon \right) \leq 2 \exp\left( -\frac{\epsilon^2}{2 \sum_{t=1}^{T} \sigma_{k,s}^2 + \frac{2}{3}\epsilon} \right).
\end{aligned}
\end{equation}
Now, we set \(\epsilon\) as \(\epsilon = \sqrt{2T\sigma_{k,s}^2 \ln T} + \frac{2}{3}\ln T\) to account for the cumulative uncertainty over time \(T\). Substituting this into the right-hand side of Bernstein's inequality, we get:
\[
2\exp \left[ -\frac{2T\sigma _{k,s}^{2}\ln T+\frac{4}{3}\ln T\cdot \sqrt{2T\sigma _{k,s}^{2}\ln T}+\frac{4}{9}(\ln T)^2}{2T\sigma _{k,s}^{2}+\frac{2}{3}\sqrt{2T\sigma _{k,s}^{2}\ln T}+\frac{4}{9}\ln T} \right].
\]
As \(T\) increases, the dominant terms in the expression are \(2T\sigma_{k,s}^2 \ln T\) in both the numerator and the denominator. Therefore, we approximate the right-hand side as:
\begin{equation}
\begin{aligned}\label{eq: inter_2T}
2 \exp\left( -\frac{\epsilon^2}{2 \sum_{t=1}^{T} \sigma_{k,s}^2 + \frac{2}{3}\epsilon} \right) \approx 2 \exp(-\ln T) = \frac{2}{T}.
\end{aligned}
\end{equation}
Thus, combining Ineq. (\ref{eq:bernstein}) and Eq. (\ref{eq: inter_2T}), we obtain:
\begin{equation}
\begin{aligned}\label{eq:epsilon_bound}
P\left( \left| \sum_{t=1}^{T} (r_{k,s}(t) - \mu_{k,s}) \right| \geq \sqrt{2T\sigma_{k,s}^2 \ln T} + \frac{2}{3}\ln T \right) \leq \frac{2}{T}.
\end{aligned}
\end{equation}
It implies that, with high probability, when \(T\) is sufficiently large, we have:
\begin{equation}
\begin{aligned}\label{eq:high_prob}
\left| \sum_{t=1}^{T} (r_{k,s}(t) - \mu_{k,s}) \right| \leq \sqrt{2T\sigma_{k,s}^2 \ln T} + \frac{2}{3}\ln T.
\end{aligned}
\end{equation}
Therefore, the regret for each camera-edge server pair, denoted by arm \( (k,s) \), can be bounded as:
\begin{equation}
\begin{aligned}\label{eq:regret_each_arm}
R_{k,s}(T) \leq \sum_{t=1}^{T} (\mu_{k,s}^* - \mu_{k,s}) + \sqrt{2T\sigma_{k,s}^2 \ln T} + \frac{2}{3}\ln T,
\end{aligned}
\end{equation}
where \(\sum_{t=1}^{T} (\mu_{k,s}^* - \mu_{k,s})\) represents the regret due to not always selecting the optimal arm \( (k,s) \), while the term \(\sqrt{2T\sigma_{k,s}^2 \ln T} + \frac{2}{3}\ln T\) captures the uncertainty in reward estimation. Since the UCB algorithm selects the arm \( (k,s) \) that maximizes the UCB value, we have:
\begin{equation}
\begin{aligned}
\mu_{k,s}^* \leq \text{UCB}_{k,s}(t) = \hat{\mu}_{k,s}(t) + \sqrt{\frac{2 \ln t}{N_{k,s}(t)}},
\end{aligned}
\end{equation}
where \(N_{k,s}(t)\) is the number of times the camera-edge server pair \( (k,s) \) has been selected up to time \(t\). This implies that:
\begin{equation}
\begin{aligned}
\mu_{k,s}^* - \mu_{k,s} \leq \left( \hat{\mu}_{k,s}(t) - \mu_{k,s} \right) + \sqrt{\frac{2 \ln t}{N_{k,s}(t)}}.
\end{aligned}
\end{equation}
Therefore, the upper bound on the cumulative loss (i.e., regret) for arm \( (k,s) \) can be expressed as:
\begin{equation}
\begin{aligned}\label{eq:regret_decompose}
\sum_{t=1}^{T} (\mu_{k,s}^* - \mu_{k,s}) \leq \sum_{t=1}^{T} \left( \hat{\mu}_{k,s}(t) - \mu_{k,s} + \sqrt{\frac{2 \ln t}{N_{k,s}(t)}} \right).
\end{aligned}
\end{equation}
Since \(\hat{\mu}_{k,s}(t)\) is an unbiased estimate of \(\mu_{k,s}\), the expected value of \(\hat{\mu}_{k,s}(t) - \mu_{k,s}\) is zero. Thus, the regret is mainly determined by the term \(\sqrt{\frac{2 \ln t}{N_{k,s}(t)}}\). We approximate the cumulative sum of this term by using an integral, given that \(N_{k,s}(t)\) is assumed to grow linearly with time, i.e., \(N_{k,s}(t) \approx a_N t\), where \(a_N\) is a constant. Under this assumption, we have:
\begin{equation}
\begin{aligned}
\int_{1}^{T} \frac{1}{\sqrt{N_{k,s}(t)}} dt \approx \int_{1}^{T} \frac{1}{\sqrt{a_N t}} dt = \frac{2\sqrt{T} - 2}{\sqrt{a_N}}.
\end{aligned}
\end{equation}
Thus, the regret for a single arm \( (k,s) \) can be bounded as:
\begin{equation}
\begin{aligned}
R_{k,s}(T) \leq \sqrt{2T\sigma_{k,s}^2 \ln T} + \frac{2}{3}\ln T + \left( 2\sqrt{T}-2 \right) \sqrt{\frac{2\ln T}{a_N}}.
\end{aligned}
\end{equation}

To obtain the total regret \( R(T) \), we sum the regret over all camera-edge server pairs in the set of arms \( \mathcal{K} \):
\begin{small}
\begin{equation}
\begin{aligned}\label{eq:total_regret}
R(T) \leq \sum_{(k,s) \in \mathcal{K}} \left( \sqrt{2T\sigma_{k,s}^2 \ln T} + \frac{2}{3}\ln T + \left( 2\sqrt{T}-2 \right) \sqrt{\frac{2\ln T}{a_N}} \right).
\end{aligned}
\end{equation}
\end{small}

Then, we sum the bias and variance terms across all arms. For the bias term, we have:
\begin{equation}
\begin{aligned}\label{eq:bias_term}
\sum_{(k,s) \in \mathcal{K}} \sqrt{2T\sigma_{k,s}^2 \ln T} \leq \mathcal{K}^{\text{arm}} \sqrt{2T\sigma_r^2 \ln T},
\end{aligned}
\end{equation}
where \( \mathcal{K}^{\text{arm}} \) is the number of arms in the optimal super arm, and \(\sigma_r^2\) is the maximum variance among all arms. Since $O$ notation represents an asymptotic upper bound, the negative term $- 2\mathcal{K}^{\text{arm}}\sqrt{\frac{2\ln T}{a_N}}$ is omitted to maintain the validity of the bound. Therefore, when \( T \) is sufficiently large, the overall regret bound can be expressed as $R(T)  \leq O\left( \left( \sqrt{2\sigma_r^2} + 2\sqrt{\frac{2}{a_N}} \right)\mathcal{K}^{\text{arm}} \sqrt{T \ln T} + \frac{2\mathcal{K}^{\text{arm}}}{3}\ln T \right).$
Thus, the cumulative regret \(R(T)\) for the distributed UCB algorithm is bounded by the sum of the regret from all camera-edge server pairs, ensuring that the regret grows sub-linearly with respect to \(T\). 
{\hfill $\blacksquare$\par}
\end{proof}

\begin{proposition}
\label{proposition: complexity}
It is assumed that there are $N$ camera agents and $T$ time steps. The computational complexity of the proposed DOL method in Algorithm \ref{alg:DistUCBGate} for each camera agent at each time step is $O(\mathcal{K}^{\text{arm}}\log \mathcal{K}^{\text{arm}})$. The overall time complexity is $O(TN\mathcal{K}^{\text{arm}}\log \mathcal{K}^{\text{arm}})$.
\end{proposition}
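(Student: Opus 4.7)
The plan is to analyze Algorithm \ref{alg:DistUCBGate} line-by-line, bound the per-agent per-round cost of each primitive operation, and then multiply by the number of agents $N$ and the horizon $T$. Because $\mathcal{K}^{\text{arm}} = \sum_{k=K_{\min}}^{K_{\max}} C_K^k S^{k+1}$ enumerates every admissible super-arm (i.e., every joint choice of a camera subset and the corresponding edge-server connections), the dominant quantity I would track is how many UCB indices each agent touches per round and how it extracts a maximizer.

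First, I would group the inner loop (Lines 7--17) into four blocks: state/load refresh (Line 7), UCB computation (Lines 8--9), action selection (Line 10), and reward observation plus statistic update (Lines 11--17). Refreshing CSI and edge-server load is $O(S) \subseteq O(\mathcal{K}^{\text{arm}})$. Evaluating the UCB formula $\hat{\mu}_{k,s}(t)+\alpha\sqrt{2\ln t / N_{k,s}(t)}$ once per super-arm costs $O(1)$ arithmetic per arm, so the full pass is $O(\mathcal{K}^{\text{arm}})$. The constraint checks in Lines 12--14 reduce to $O(1)$ table lookups on the precomputed capacity, latency, and connection counters, and the increments in Lines 15--17 are also $O(1)$ on the single selected arm.

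The $\log \mathcal{K}^{\text{arm}}$ factor then has to come from Line 10, where the agent must pick the feasible super-arm with the largest UCB value. I would justify this by showing that an efficient implementation either sorts the $\mathcal{K}^{\text{arm}}$ candidates in descending UCB order --- costing $O(\mathcal{K}^{\text{arm}} \log \mathcal{K}^{\text{arm}})$ via comparison-based sort --- or maintains a max-heap keyed by UCB, so that constraint-respecting candidates can be popped in $O(\log \mathcal{K}^{\text{arm}})$ each until one satisfies (\ref{eq:optimization_problem}a)--(\ref{eq:optimization_problem}e). In the worst case all $\mathcal{K}^{\text{arm}}$ candidates may need to be inspected, giving the stated $O(\mathcal{K}^{\text{arm}} \log \mathcal{K}^{\text{arm}})$ bound. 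Aggregating this with the $O(\mathcal{K}^{\text{arm}})$ contributions from the other blocks yields the per-agent per-step complexity claimed in the proposition, and summing over $N$ agents and $T$ rounds produces the overall $O(T N \mathcal{K}^{\text{arm}} \log \mathcal{K}^{\text{arm}})$ bound.

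The main obstacle, and the step I would take the most care on, is ensuring that the periodic synchronization block (Lines 18--22) does not dominate the per-round budget. I would handle this by arguing that a single communication round touches only the $O(\mathcal{K}^{\text{arm}})$ reward estimates and action counters at the designated aggregator, so its amortized cost per agent is $O(\mathcal{K}^{\text{arm}})$, which is absorbed into the $O(\mathcal{K}^{\text{arm}} \log \mathcal{K}^{\text{arm}})$ term. A secondary subtlety is that the combinatorial growth of $\mathcal{K}^{\text{arm}}$ in $K$ and $S$ could in principle invite tighter arm-elimination arguments, but the proposition claims only an upper bound, so the straightforward enumerate-and-sort analysis suffices.
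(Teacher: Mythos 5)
Your proposal is correct and follows essentially the same route as the paper's proof: a per-agent, per-step accounting of the state update, UCB evaluation, action selection, and statistic updates, with the $O(\mathcal{K}^{\text{arm}}\log\mathcal{K}^{\text{arm}})$ term attributed to the maximization step and the final bound obtained by multiplying over $N$ agents and $T$ rounds. If anything, you are more explicit than the paper in justifying where the logarithmic factor comes from (sorting or a max-heap over the candidate super-arms) and in checking that the periodic synchronization does not dominate the per-round budget.
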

\begin{proof}
Each camera agent \(k\) computes the UCB value for each edge server \({s}\) at each time step \(t\). This involves updating the channel state and edge server load, computing the UCB values, selecting the optimal action, and updating the reward estimates. As for each camera agent, the complexity of updating the channel state and edge server load for each camera agent is \(O(\mathcal{K}^{\text{arm}})\). Moreover, the complexity of computing the UCB value for each edge server is \(O(\mathcal{K}^{\text{arm}})\). The complexity of selecting the action that maximizes the UCB value is \(O(\mathcal{K}^{\text{arm}} \log \mathcal{K}^{\text{arm}})\). Thus, the complexity for each camera agent at each time step is \(O(\mathcal{K}^{\text{arm}} \log \mathcal{K}^{\text{arm}})\). Given that there are \(N\) camera agents and \(T\) time steps, the overall time complexity is $O(TN\mathcal{K}^{\text{arm}}\log \mathcal{K}^{\text{arm}})$.
{\hfill $\blacksquare$\par}
\end{proof}

\begin{proposition}
\label{proposition: comm cost}
Assuming there are \(\mathcal{X}\) communication rounds over \(T\) time steps, the total communication cost of Algorithm \ref{alg:DistUCBGate} is $O\left( \mathcal{X}N\mathcal{K}^{\text{arm}} \right) $.
\end{proposition}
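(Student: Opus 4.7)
The plan is to separate the algorithm into its purely local steps and its inter-agent communication steps, and then count the data exchanged in the latter. First, I would inspect Alg.~\ref{alg:DistUCBGate} and observe that the channel/load update (Line~\ref{alg-line: state update}), relay/server selection (Line~\ref{alg-line: Select}), UCB computation (Line~\ref{alg-line: ucb}), action execution (Line~\ref{alg-line: Execute action}), and all subsequent bookkeeping updates (Lines~\ref{alg-line: update 1}--\ref{alg-line: update 3}) are carried out entirely on each agent's own state and local reward signal, so they contribute nothing to inter-agent communication. Consequently, the total communication cost is concentrated in the periodic synchronization block between Lines~\ref{alg-line: comm. start} and~\ref{alg-line: comm. end}.

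Second, I would bound the cost of a single communication round. As described in Sec.~\ref{sec:Distributed Online Learning for CMAB Problem}, within one round each of the $N$ agents uploads its local reward estimates $\hat{\mu}_{k,s}(t)$ and visit counts $N_{k,s}(t)$ to the designated central edge server, which aggregates them and then broadcasts the updated global estimates back to every agent. Since each agent maintains one scalar estimate and one count per arm, and the optimal super arm ranges over at most $\mathcal{K}^{\text{arm}}=\sum_{k=K_{\min}}^{K_{\max}} C_K^k S^{k+1}$ arms, the payload per agent per round is $O(\mathcal{K}^{\text{arm}})$ under a standard unit-cost model for scalar exchanges. Summing over the $N$ agents (for both the upload and the broadcast phase) yields a per-round communication cost of $O(N\mathcal{K}^{\text{arm}})$.

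Finally, I would aggregate over the $\mathcal{X}$ synchronization rounds that occur within the horizon $T$. Multiplying the per-round bound by $\mathcal{X}$ immediately gives the claimed total cost $O(\mathcal{X}N\mathcal{K}^{\text{arm}})$. The main subtlety — rather than a genuine obstacle — is justifying that no hidden communication occurs outside the synchronization block: I would emphasize that the CSI and edge-server load in Line~\ref{alg-line: state update} are locally observable, that the constraint check and reward truncation in Line~\ref{alg-line: set reward} depend only on the agent's own action and observation, and that the relay-versus-direct choice is made from cached local statistics. With these observations, the counting argument above is exhaustive and the bound follows.
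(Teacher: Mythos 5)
Your argument is correct and follows essentially the same route as the paper's proof: you bound the per-round cost at $O(N\mathcal{K}^{\text{arm}})$ by counting the per-arm reward estimates and visit counts exchanged in the upload, aggregation, and broadcast phases, and then multiply by the $\mathcal{X}$ rounds. Your additional check that all steps outside the synchronization block are purely local is a welcome (if minor) strengthening, but the counting itself matches the paper.
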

\begin{proof}
Each communication round involves local communication between each camera agent and the central edge server, as well as the global aggregation and update phases. 1) \textit{Local Communication}: Each camera agent \(k\) communicates its local reward estimates \(\hat{\mu}_{k,{s}}(t)\) and action counts \(N_{k,{s}}(t)\) for each edge server \({s}\) to the central edge server. The communication cost for each agent per round is \(O(\mathcal{K}^{\text{arm}})\). Given \(N\) agents, the total local communication cost per round is \(O(N\mathcal{K}^{\text{arm}})\). 2) \textit{Global Aggregation}: The central edge server aggregates the information from all \(N\) camera agents. The complexity of aggregating the information is \(O(N\mathcal{K}^{\text{arm}})\). 3) \textit{Global Update}: The central server then broadcasts the updated global reward estimates to all \(N\) agents. The communication cost for broadcasting is \(O(N\mathcal{K}^{\text{arm}})\). Assuming there are \(\mathcal{X}\) communication rounds over \(T\) time steps, the total communication cost is $O(\mathcal{X}N\mathcal{K}^{\text{arm}})$.
{\hfill $\blacksquare$\par}
\end{proof}

\begin{figure*}[t]
  \centering
  \includegraphics[width=1.02\textwidth]{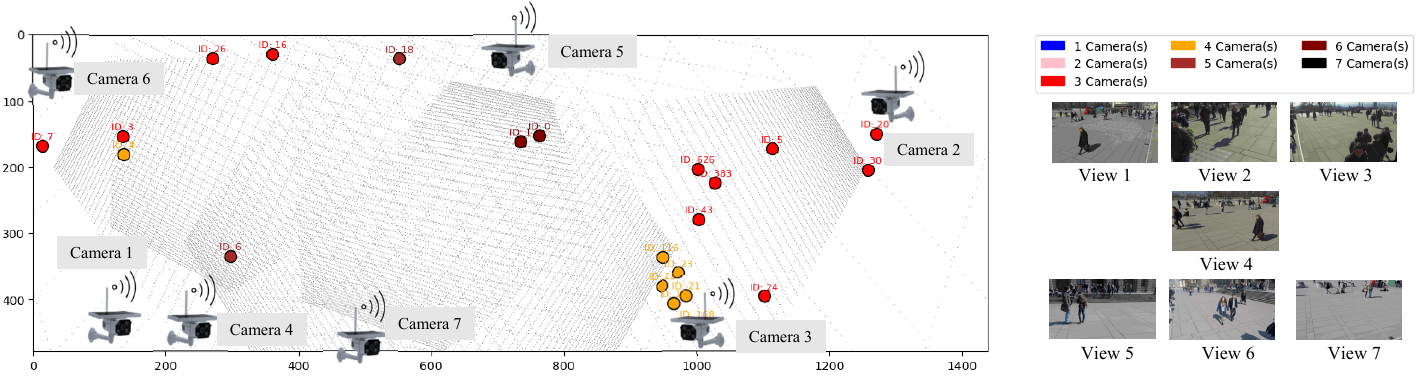}
  \caption{The visualization of the edge video system. \textbf{Left}: We use contour lines to display the perception coverage of different cameras. The small dots in the grid represent pedestrians, with different colors of the dots indicating the number of cameras covering each pedestrian. It can be observed that areas closer to the perception center of cameras are covered by more cameras. \textbf{Right}: The visualization of the raw video data and the legend for different numbers of covered cameras.}
  \label{fig:Sensing-result}
\end{figure*}
\begin{figure}[t]
  \centering
  \subfigure[Pedestrian perception result using only \textbf{single camera}.]{
    \includegraphics[width=0.48\textwidth]{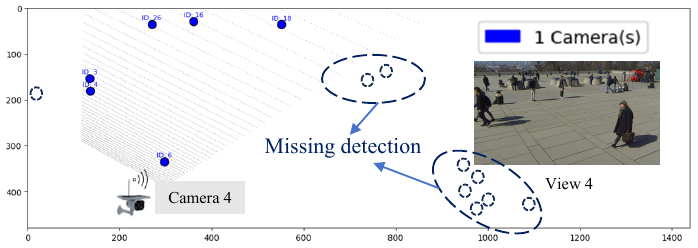}\label{fig:single_perception}
  }
  \subfigure[Pedestrian perception result using \textbf{two collaborative cameras.}]{
    \includegraphics[width=0.48\textwidth]{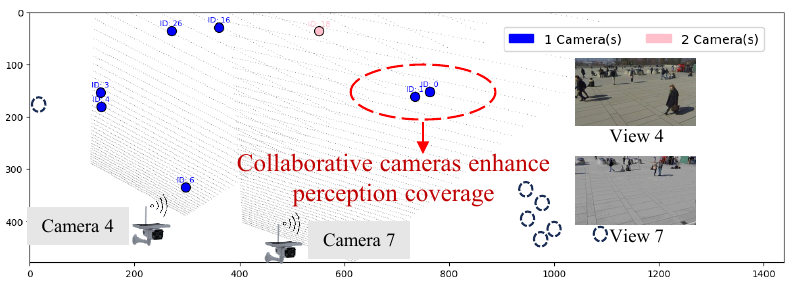}\label{fig:collaborative_perception}
  }
  \caption{Comparison of single and collaborative perception results. Fig. \ref{fig:single_perception} shows the detection using only Camera 4. Fig. \ref{fig:collaborative_perception} demonstrates the enhanced detection capability achieved through the collaboration between Camera 4 and Camera 7.}
  \label{fig:perception_comparison}
  \vspace{-3mm}
\end{figure}
\section{Performance Evaluation}
\label{sec:Performance Evaluation}
\subsection{Simulation Setup}
\begin{figure}[t]
  \centering
  \includegraphics[width=0.50\textwidth]{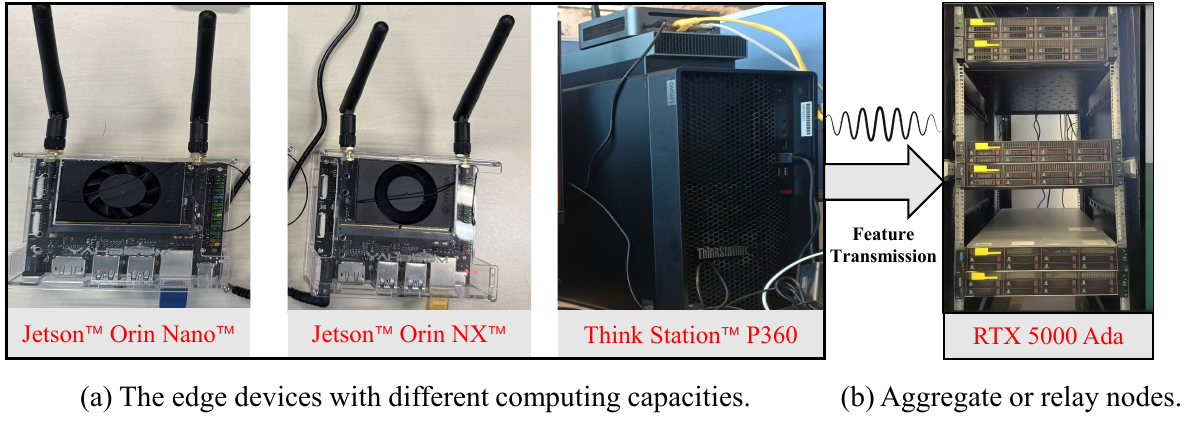}
  \caption{{\color{black}{Real-world Hardware Testbed.}}}
  \label{fig:Hardware}
\end{figure}
We set up simulations to evaluate our PIB framework, aiming at predicting pedestrian occupancy in urban settings using multiple cameras. These simulations replicate a city environment, with variables like signal frequency and device density affecting the outcomes.

Our simulations use a 2.4 GHz operating frequency, a path loss exponent of 3.5, and a shadowing deviation of 8 dB. Devices emit an interference power of 0.1 Watts, with densities ranging from 10 to 100 devices per 100 square meters, allowing us to test different levels of congestion. The bandwidth is set to 2 MHz, with cameras located at about 200 meters from the edge server. We employ the \textit{Wildtrack} dataset from EPFL, which features high-resolution images from seven cameras located in a public area, capturing unscripted pedestrian movements\cite{chavdarova2018wildtrack}. This dataset provides 400 frames per camera at 2 frames per second, documenting over 40,000 bounding boxes that highlight individual movements across more than 300 pedestrians. {\color{black}{As shown in Fig. \ref{fig:Hardware}, our experimental setup features a practical hardware testbed that includes three distinct edge devices: NVIDIA Jetson™ Orin Nano™ 4GB, NVIDIA Jetson™ Orin NX™ 16GB, and ThinkStation™ P360. The edge devices collaboratively interact with edge servers equipped with RTX 5000 Ada GPUs for efficient video decoding.}}
Our code will be made available at \href{https://github.com/fangzr/PIB-Prioritized-Information-Bottleneck-Framework}{github.com/fangzr/PIB-Prioritized-Information-Bottleneck-Framework}.

The primary measure we use is MODA, which assesses the system’s ability to accurately detect pedestrians based on missed and false detections. We also look at the rate-performance tradeoff to understand how communication overhead affects system performance. For comparative analysis, we consider five baselines, including video coding and image coding:
\begin{itemize}
    \item \textbf{TOCOM-TEM}\cite{shao2023task}: A task-oriented communication framework utilizing a temporal entropy model for edge video analytics. It employs the deterministic IB principle to extract and transmit compact, task-relevant features, integrating spatial-temporal data on the server for improved inference accuracy.
    \item \textbf{JPEG}\cite{wallace1992jpeg}: A widely used image compression standard employing lossy compression algorithms to reduce image data size, commonly used to decrease communication load in networked camera systems.
    \item \textbf{H.265}\cite{bossen2012hevc}: Also known as High Efficiency Video Coding (HEVC) or MPEG-H Part 2, which offers up to 50\% better data compression than its predecessor H.264 (MPEG-4 Part 10), while maintaining the same video quality, crucial for efficient data transmission in high-density camera networks.
    \item \textbf{H.264}\cite{H264}: Known as Advanced Video Coding (AVC) or MPEG-4 Part 10, which significantly enhances video compression efficiency, allowing high-quality video transmission at lower bit rates.
    \item \textbf{AV1}\cite{han2021technical}: AOMedia Video 1 (AV1) is an open, royalty-free video coding format developed by the Alliance for Open Media (AOMedia), designed to succeed VP9 with improved compression efficiency. AV1 outperforms existing codecs like H.264 and H.265, making it ideal for online video applications.
\end{itemize}

\begin{figure*}[t]
  \centering
  \subfigure[Communication bottleneck vs MODA.]{
    \includegraphics[width=5.5cm]{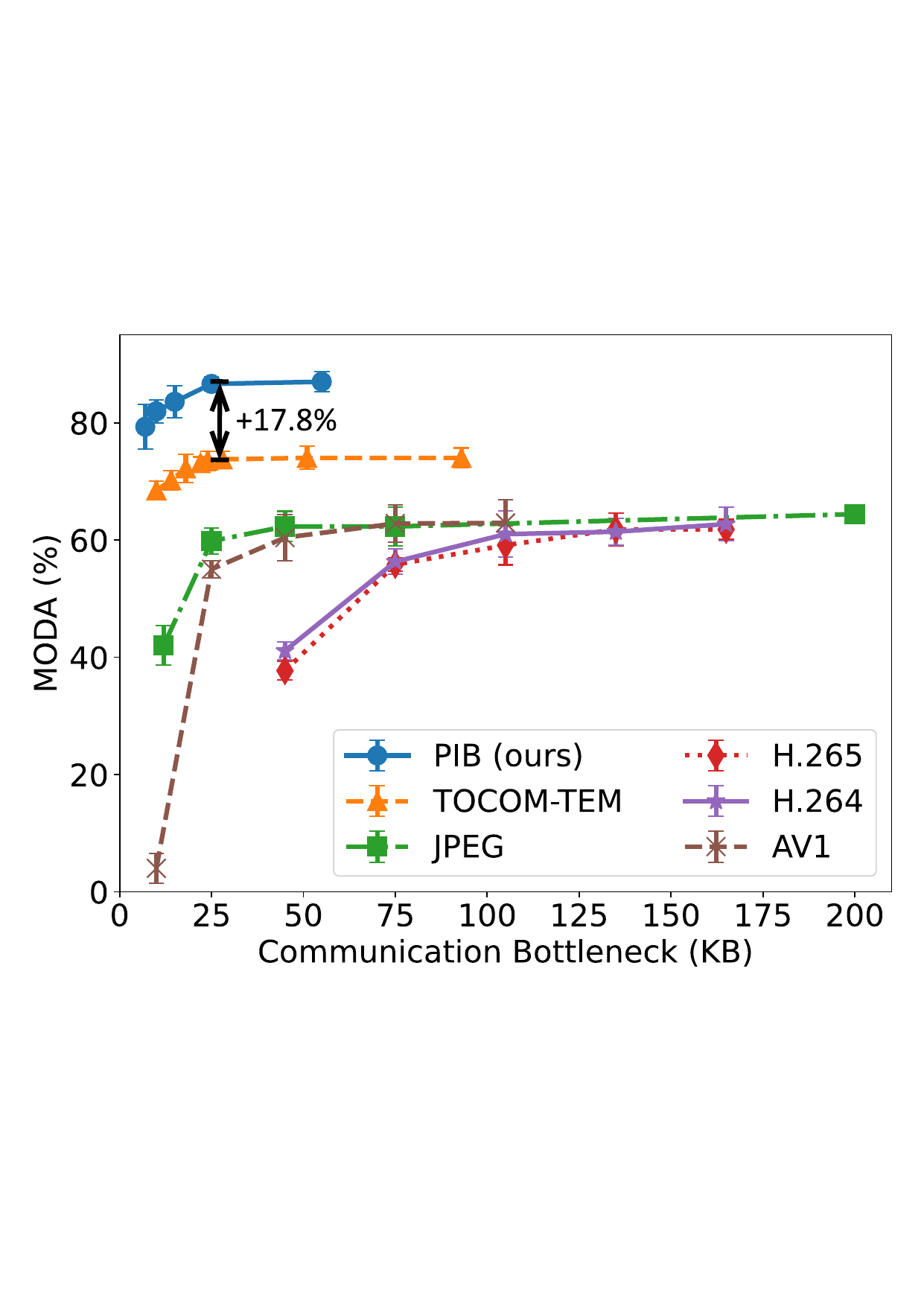}
    \label{fig:CBMA}
  }
  \subfigure[{\color{black}{Communication bottleneck vs MODP.}}]{
    \includegraphics[width=5.5cm]{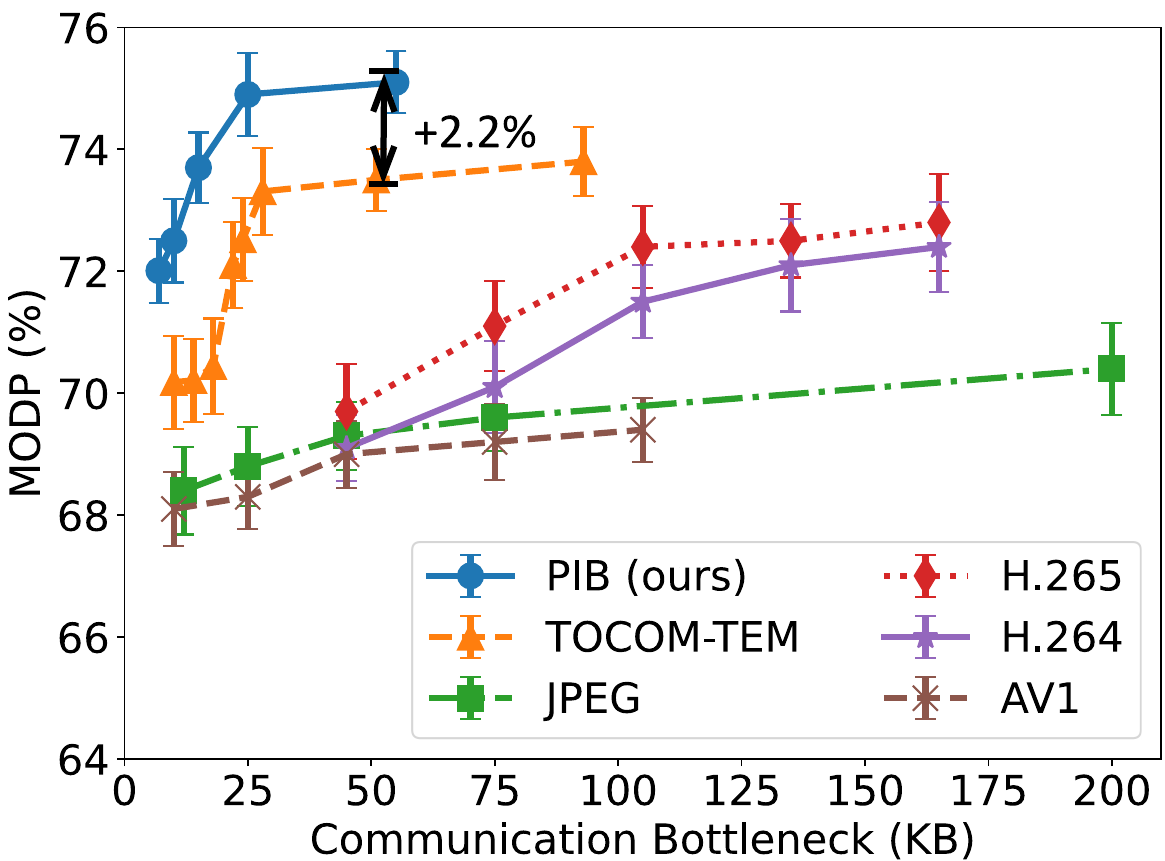}
    \label{fig:CBMP}
  }
  \subfigure[Number of delayed cameras vs MODA.]{
    \includegraphics[width=5.5cm]{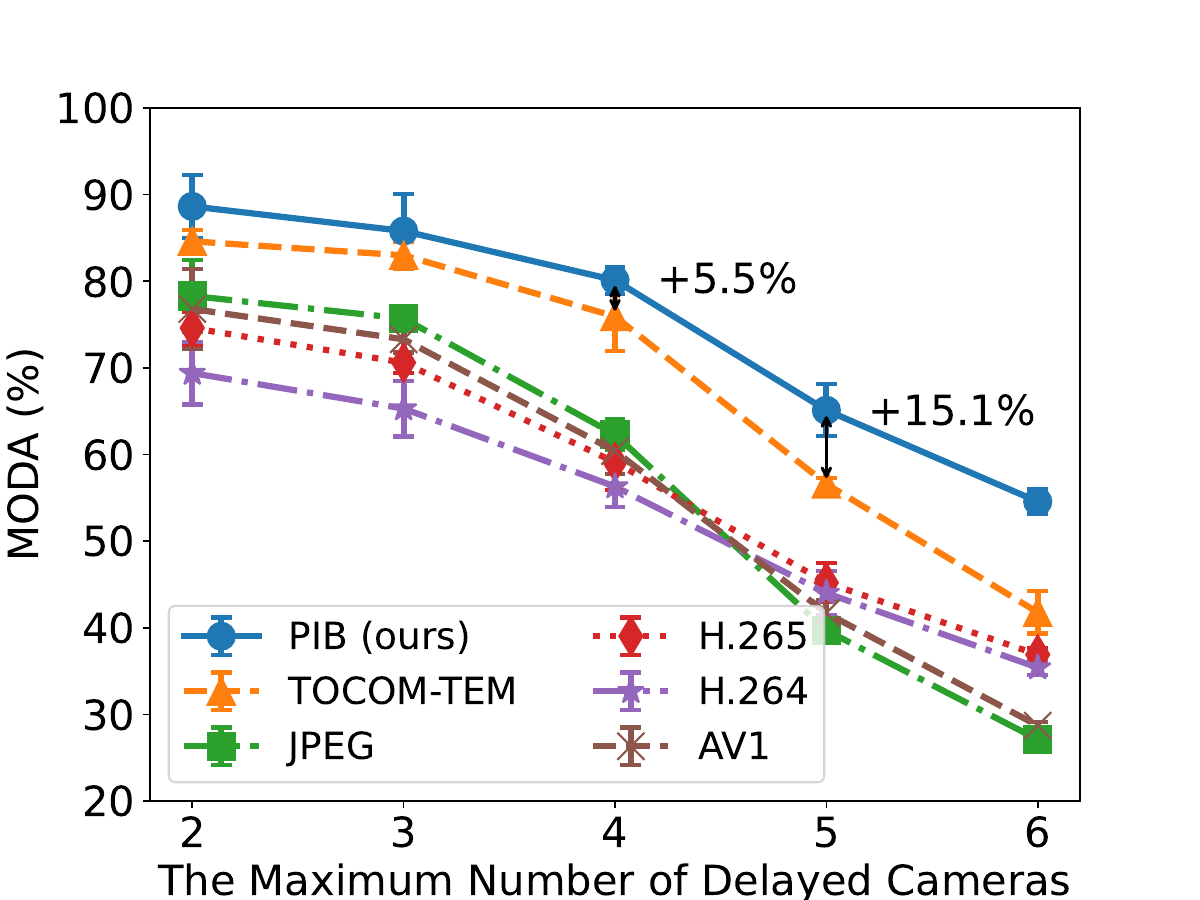}
    \label{fig:num-vs-moda}
  }
  \caption{Impact of communication bottlenecks and delayed cameras on perception accuracy.}
  \label{fig:combined-figure}
  \vspace{-3mm}
\end{figure*}

In the simulation study, we examine the effectiveness of multiple camera systems in forecasting pedestrian presence. Unlike a single-camera configuration, this method minimizes obstructions commonly found in crowded locations by integrating perspectives from various angles. Fig. \ref{fig:Sensing-result} demonstrates our experimental setup, where seven wireless edge cameras jointly perceive a 12m×36m area quantized into a 480×1440 grid using a resolution of 2.5 $\text{cm}^2$. We use contour lines to display the camera's perception range and the resolution of coordinates within that range. The denser the lines, the closer the perceived target is to the camera, and the higher the perception accuracy. Additionally, to clearly show the coverage of pedestrians at different positions by edge cameras, different colors represent the number of cameras covering each pedestrian. It can be observed that pedestrians in different locations have different probabilities of being detected, which will also affect the priority selection of cameras. Fig. \ref{fig:single_perception} shows the perception results using a single camera (the 4th edge camera). The dashed circles represent pedestrians that are missing detection. It is evident that the perception range of a single camera is limited to its own angle and coverage area, resulting in numerous missing detections. In Fig. \ref{fig:collaborative_perception}, we let the 4th and 7th edge cameras collaborate with each other. It can be observed that the collaboration enhances perception coverage, though there are still several pedestrians not detected compared to the results from seven edge cameras. This highlights the improved but still limited capability of collaborative perception with only two cameras, indicating the necessity for a higher number of cameras to achieve comprehensive coverage and accurate pedestrian detection\footnote{Our demo is available at the url: \href{https://github.com/fangzr/PIB-Prioritized-Information-Bottleneck-Framework}{github.com/fangzr/PIB-Prioritized-Information-Bottleneck-Framework}}.

To evaluate the impact of communication bottlenecks and delayed cameras on perception accuracy, we present in Fig. \ref{fig:CBMA}--\ref{fig:num-vs-moda} the relationships between communication constraints and the perception accuracy. Nevertheless, the benefit of collaborative perception is accompanied by excessive communication overhead. The communication bottleneck refers to network capacity constraints that prevent real-time data transmission, causing frame latency. This issue is prevalent in UDP-based wireless streaming systems, where high throughput often results in out-of-order or delayed frames due to varying channel quality and jitter. Moreover, different coding schemes cause varying delays under dynamic channel conditions, misaligning data fusion due to channel quality and jitter. Therefore, in order to evaluate how latency differences affect perception accuracy, we set communication bottleneck constraints. In our experiments, we use MODA (Multiple Object Detection Accuracy) and MODP (Multiple Object Detection Precision) to assess coding efficiency and robustness.

In Fig. \ref{fig:CBMA}, PIB exhibits higher MODA across different communication bottlenecks compared to five baselines by more than 17.8\%. This is due to PIB's strategic multi-view feature fusion, informed by channel quality and priority-based ROI selection. PIB prioritizes the shared information to mitigate delays that could degrade multi-camera perception accuracy. Interestingly, JPEG outperforms video coding schemes like H.265 and AV1 in our experiments, due to the low FPS of 2 used for video transmission, which does not leverage motion prediction advantages. AV1 performs well due to its high compression efficiency compared to H.264 and H.265. Fig. \ref{fig:CBMP} shows that PIB achieves higher MODP performance compared to three other baselines. The results indicate that MODP is less affected by latency because it measures the precision of detection without considering missed detections, whereas MODA is more impacted as it accounts for both missed and false detections. 

Fig. \ref{fig:num-vs-moda} depicts the performance rates of different compression techniques in a multi-view scenario in terms of the number of delayed cameras. Our proposed PIB method and TOCOM-TEM, both utilizing multi-frame correlation models, effectively reduce redundancy across multiple frames, achieving superior MODA at equivalent compression rates. PIB, in particular, employs a prioritized IB framework, enabling an adaptive balance between compression rate and collaborative sensing accuracy, optimizing MODA across various channel conditions. It is worth noting that the impact on collaborative perception MODA can be ignored in scenarios with fewer delayed cameras (<3). However, as channel conditions worsen and more cameras experience frame delays due to failing to meet communication bottleneck constraints, the performance significantly degrades.

\begin{figure}[t]
  \centering
  \includegraphics[width=0.47\textwidth]{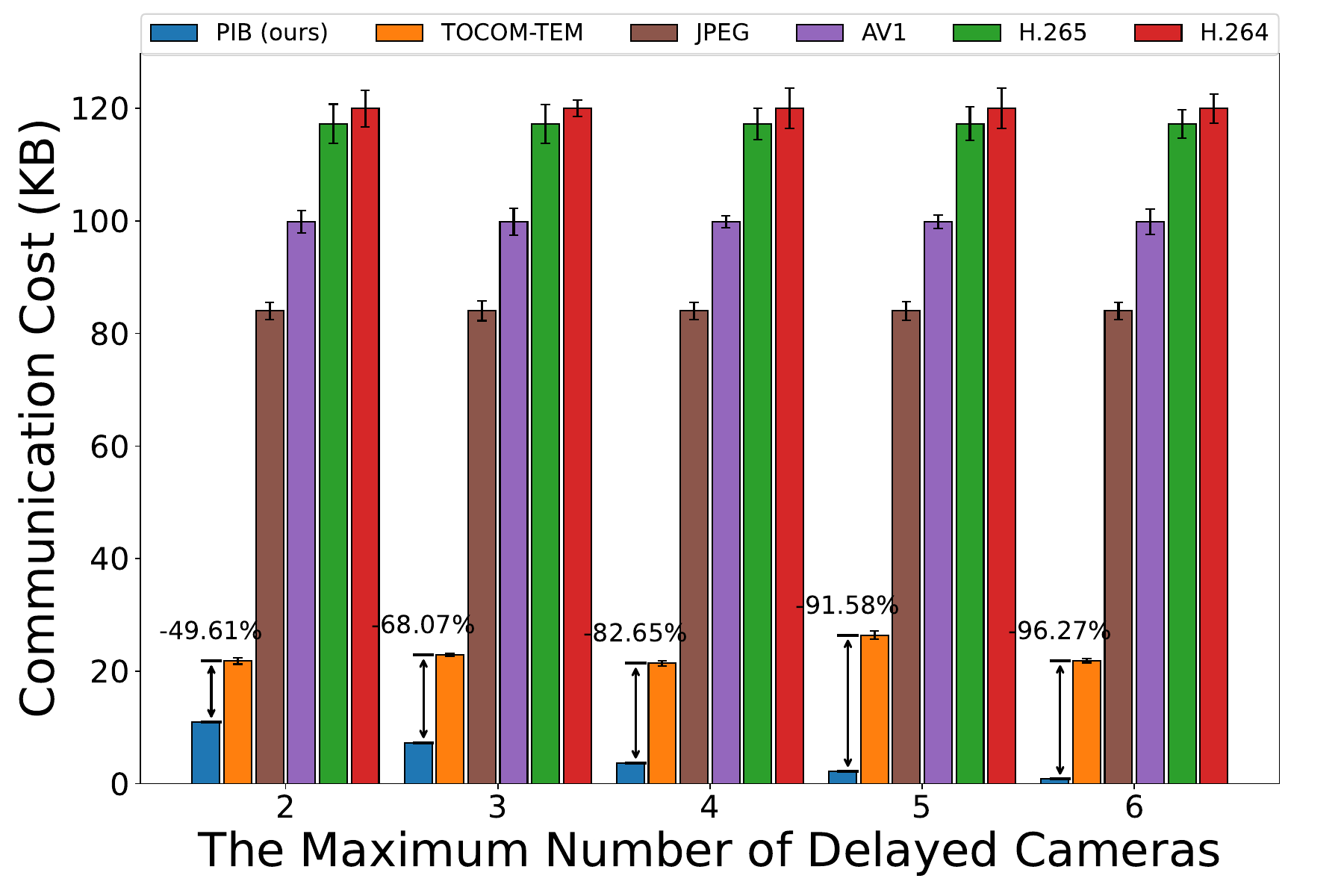}
  \caption{{\color{black}{Delayed cameras vs communication cost.}}}
  \label{fig:num-vs-communication cost}
  \vspace{-3mm}
\end{figure}
 
In Fig. \ref{fig:num-vs-communication cost}, we analyze the impact of the number of delayed cameras on the communication cost\footnote{The communication cost of a method is the average size of each frame. The instantaneous streaming rate is equal to the communication cost multiplied by the frames per second (fps).} for various algorithms. The PIB algorithm demonstrates a significant reduction in communication costs as the number of delayed cameras increases. When the number of delayed cameras equals 4, PIB, utilizing a gate mechanism based on a distributed UCB algorithm, effectively filters out useless streaming data, greatly reducing communication costs. Compared to TOCOM-TEM, PIB achieves an impressive 82.8\% decrease in communication costs. This efficiency is due to the algorithm's priority mechanism, which adeptly assigns weights and filters out adverse information caused by delays. Consequently, PIB prioritizes the transmission of high-quality features from cameras with more accurate occupancy predictions. For a fair comparison, baselines are selected at their highest MODA with the minimum communication cost data. Due to the use of an information bottleneck framework, PIB extracts only task-related features, resulting in a significantly reduced compression rate compared to five compression baselines.

\begin{figure}[t]
  \centering
  \subfigure[Streaming packet size for PIB and five baselines over time.]{
    \includegraphics[width=0.47\textwidth]{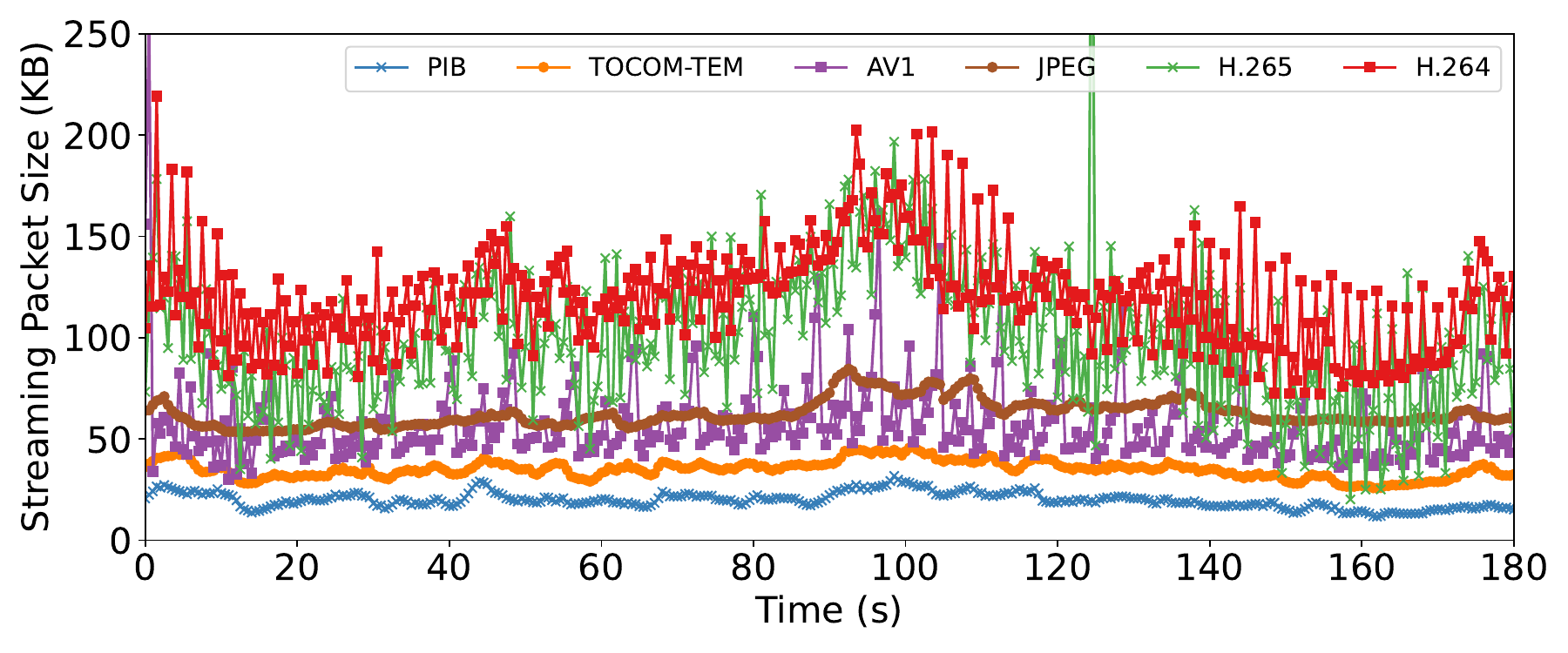}\label{fig:time_vs_bitrate}
  }
  \subfigure[CDF for streaming packet sizes.]{
    \includegraphics[width=0.47\textwidth]{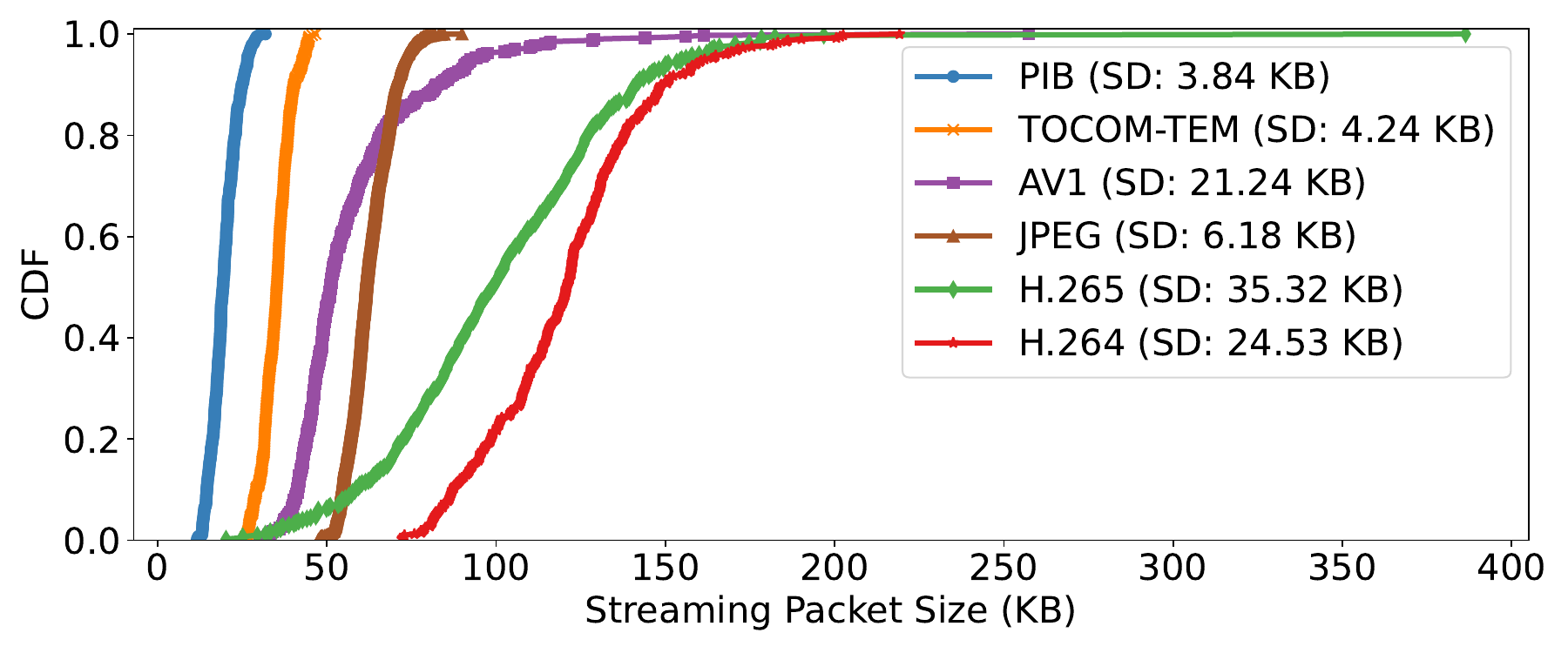}\label{fig:cdf}
  }
  \caption{(a) Streaming packet sizes for various compression algorithms over time slots. (b) Cumulative Distribution Functions (CDF) of the streaming packet sizes for different methods.}
  \label{fig:streaming_packet_size_comparison}
  \vspace{-3mm}
\end{figure}

Fig.~\ref{fig:streaming_packet_size_comparison} presents the streaming packet sizes and their cumulative distribution functions (CDF) for various compression algorithms. Fig.~\ref{fig:time_vs_bitrate} illustrates the streaming packet sizes for PIB, TOCOM-TEM, AV1, JPEG, H.265, and H.264 over a duration of three minutes. All encoding methods were evaluated at their highest MODA with minimal communication costs. PIB consistently exhibits the smallest packet sizes, followed by TOCOM-TEM, indicating superior transmission efficiency. Additionally, PIB and TOCOM-TEM demonstrate less variability in packet sizes compared to AV1, enhancing transmission robustness under adverse channel conditions. JPEG compression yields smaller and more stable packet sizes than H.264 and H.265, likely due to the limited transmission rate of 2 fps restricting the efficiency of video codecs. Fig.~\ref{fig:cdf} shows the CDF of streaming packet sizes for all algorithms. The standard deviation (SD) for each method is calculated as $\text{SD} = \sqrt{\frac{\sum_{i=1}^{n} (\text{Packet Size}_i - \text{Mean})^2}{n}}$. A lower SD indicates improved transmission robustness by reducing jitter and minimizing buffer requirements. PIB has the lowest SD (3.84 KB), followed by TOCOM-TEM (4.24 KB) and JPEG (6.18 KB). The other baseline methods exhibit higher SD values, underscoring PIB's advantage in minimizing both transmission requirements and packet size variability.

\begin{table}[t]
    \centering
    \caption{Impact of the Number of Fusion Cameras on Collaborative Perception Accuracy and Communication Cost.}
    \begin{tabular}{@{}>{\centering\arraybackslash}p{1.5cm}>{\centering\arraybackslash}p{1.8cm}>{\centering\arraybackslash}p{2.2cm}>{\centering\arraybackslash}p{2.2cm}@{}}
        \toprule
        \textbf{Number} & \textbf{Comm. Cost} & \textbf{MODA (\%)} & \textbf{MODP (\%)} \\ 
        \midrule
        \textbf{1} & 2.19 KB & 65.11 (+17.99\%) & 71.53 (+2.59\%) \\
        \textbf{2} & 3.68 KB & 78.09 (+19.93\%) & 72.71 (+1.65\%) \\
        \textbf{3} & 7.29 KB & 84.99 (+8.85\%) & 72.92 (+0.29\%) \\
        \textbf{4} & 10.98 KB & 88.03 (+3.57\%) & 74.23 (+1.80\%) \\
        \textbf{5} & 15.84 KB & 88.64 (+0.69\%) & 75.15 (+1.24\%) \\
        \textbf{6} & 17.68 KB & 88.76 (+0.14\%) & 75.80 (+0.87\%) \\
        \midrule
        \textbf{No Fusion} & \textbf{0.82 KB} & \textbf{55.17} & \textbf{69.72} \\
        \bottomrule
    \end{tabular}
    \label{tab:fusion_camera}
\end{table}
\begin{figure}[t]
  \centering
  \includegraphics[width=0.47\textwidth]{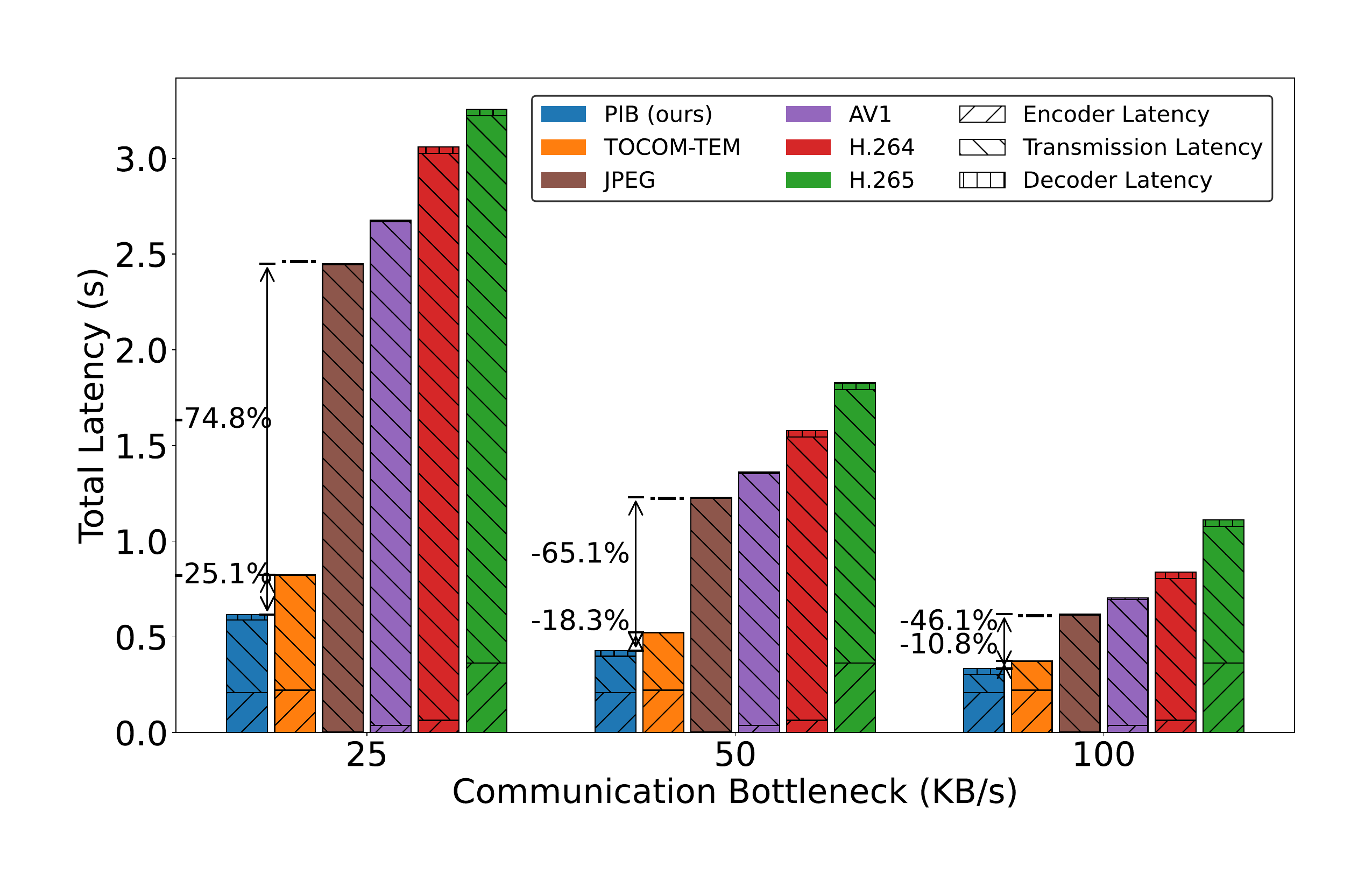}
  \caption{\color{black}{{Communication bottleneck vs latency.}}}
  \label{fig:Communication bottleneck vs latency}
  \vspace{-3mm}
\end{figure}

\begin{table}[t]
\centering
\color{black} 
\caption{\color{black}{Encoder Latency Across Different Platforms.}}
\label{tab:encoding_latency}
\begin{tabular}{|l|c|c|c|}
\hline
\backslashbox{\textbf{Phase}}{\textbf{Platform}} & \textbf{Nano (ms)} & \textbf{Orin NX (ms)} & \textbf{P360 (ms)} \\ \hline
Feature map generation     & 755.32±69.32 & 227.54±2.65 & 37.49±0.90 \\ \hline
Entropy coding      & 10.83±3.51   & 1.79±0.75   & 0.40±0.11  \\ \hline
\textbf{Total encoder latency}      & \textbf{766.15±70.55} & \textbf{229.34±2.67} & \textbf{37.80±0.94} \\ \hline
\end{tabular}
\end{table}

{\color{black}{Our priority-based mechanism selects the camera with the most targets within the RoI for the highest transmission priority. Collaboration priority is thus determined by the target count in each camera's perception area. As shown in Table \ref{tab:fusion_camera}, just 0.82 KB of perception data achieves a MODA accuracy of 55.17\% and MODP of 69.72\%, highlighting significant redundancy among edge cameras. Adding more cameras initially improves perception significantly but offers diminishing returns as communication costs to the edge server increase. In Fig. \ref{fig:Communication bottleneck vs latency}, we show the relationship between communication bottleneck and total latency for different algorithms. By leveraging a priority information bottleneck framework and the UCB algorithm to reduce redundancy, our PIB, despite slightly higher encoding latency than the traditional video codecs, can achieve much lower transmission latency due to its efficient compression. Under a 25 KB/s bottleneck, our PIB reduces latency by 25.1\% over TOCOM-TEM and 74.8\% over JPEG. At 50 KB/s, our PIB outperforms TOCOM-TEM by 18.3\% and JPEG by 65.1\%, respectively. At 100 KB/s, PIB achieves 10.8\% and 46.1\% lower latency than TOCOM-TEM and JPEG, respectively. The encoding latency results of our PIB in different edge devices are presented in Table \ref{tab:encoding_latency}. It can be observed that the feature map generation phase dominates the overall encoding latency, while the entropy coding phase contributes a negligible amount of time. Furthermore, edge devices with higher computating capacity exhibit significantly lower encoding latency.} }

\begin{figure}[t]
  \centering
  \includegraphics[width=0.47\textwidth]{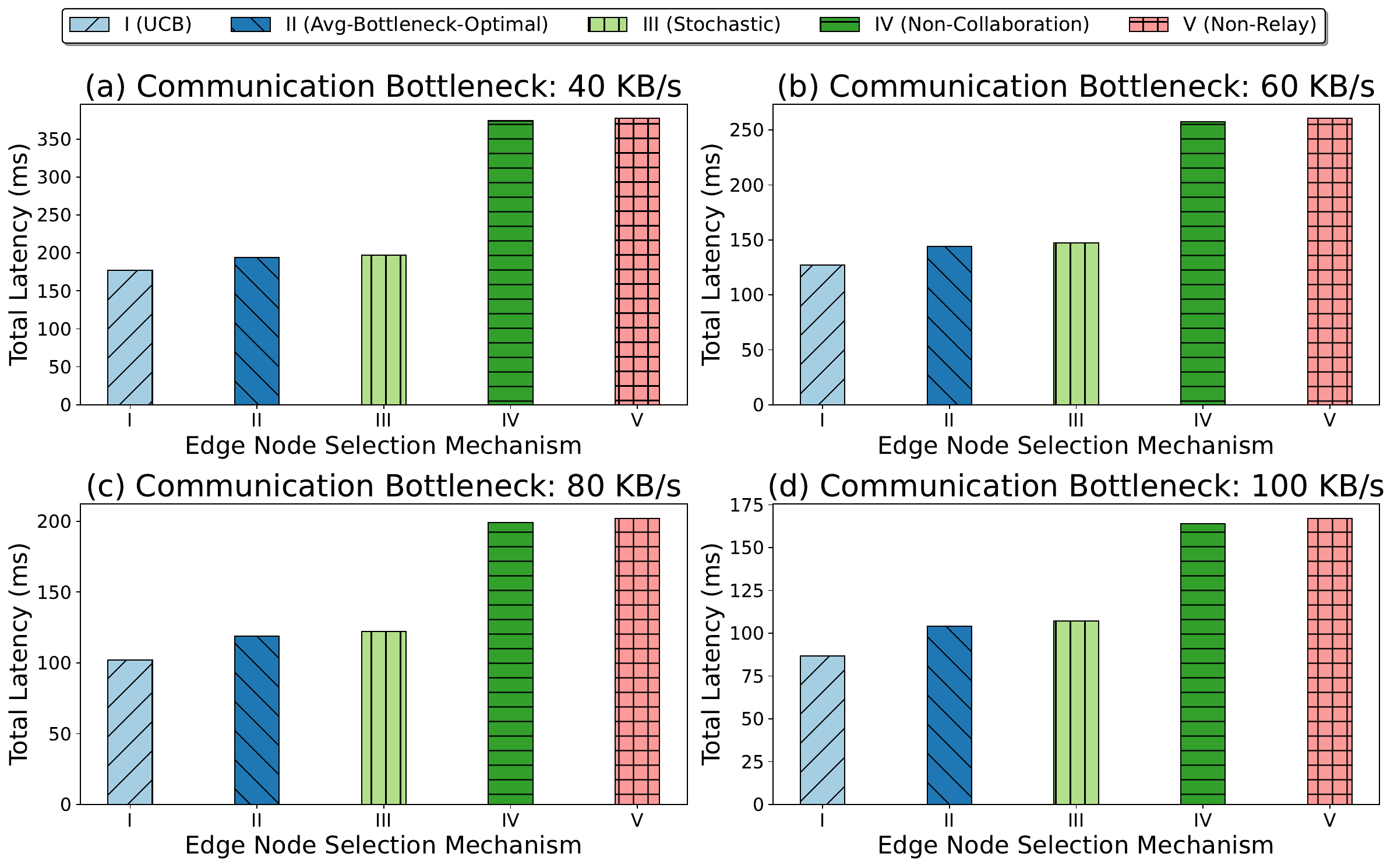}
  \caption{Impact of communication bottleneck on total latency for different edge server selection mechanisms.}
  \label{fig:Communication rate vs latency with diff selection method}
  \vspace{-3mm}
\end{figure}
Fig. \ref{fig:Communication rate vs latency with diff selection method} demonstrates the effectiveness of the proposed Gate Mechanism Based on Distributed Online Learning (Sec. \ref{sec:Gate Mechanism Based on Distributed Online Learning}). This figure evaluates total latency, defined as the sum of inference, relay, and transmission latency, excluding encoder latency, under different communication bottlenecks for various edge node selection mechanisms. Four baselines are used: \textit{Avg-Bottleneck-Optimal} (exhaustive search for highest average bottleneck), \textit{Stochastic} (random selection of relay and fusion nodes), \textit{Non-Collaboration} (single edge server for fusion), and \textit{Non-Relay} (lowest load edge server). The UCB method consistently achieves the lowest total latency, adapting efficiently to edge server load and channel conditions with minimal overhead, thereby optimizing collaborative selection of edge servers. Fig. \ref{fig:edge-servers-vs-latency} illustrates the impact of different numbers of edge servers on the latency of multi-camera collaborative sensing data transmission and inference under varying communication bottlenecks. The results indicate that as the number of edge servers increases, the overall average latency of the cameras significantly decreases. This is because the communication bottleneck is comparable in magnitude to the size of the intermediate representations transmitted by the cameras. Therefore, increasing the number of edge servers markedly reduces latency, showcasing the effectiveness of adding more edge servers in enhancing system performance.

\begin{figure}[t]
  \centering
  \includegraphics[width=0.50\textwidth]{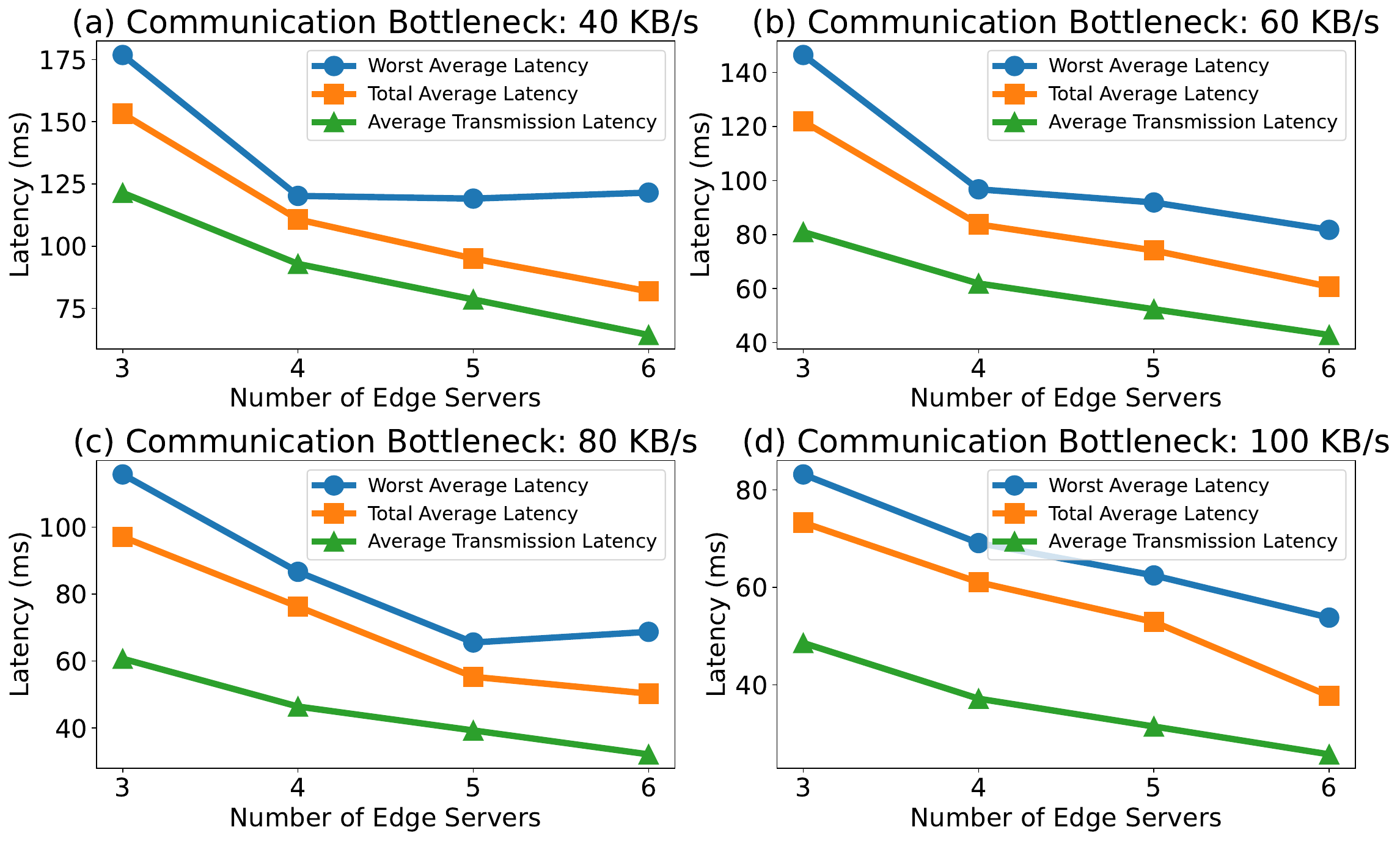}
  \caption{The number of edge servers vs various latencies under different communication bottlenecks.}
  \label{fig:edge-servers-vs-latency}
  \vspace{-3mm}
\end{figure}

\section{Conclusion}
In this paper, we have proposed the Prioritized Information Bottleneck (PIB) framework as a robust solution for collaborative edge video analytics. Our contributions are two-fold. First, we have developed a prioritized inference mechanism to intelligently determine the importance of different cameras' FOVs, effectively addressing the constraints imposed by channel capacity and data redundancy. Second, the PIB framework showcases its effectiveness by notably decreasing communication overhead and improving tracking accuracy without requiring video reconstruction at the edge server. Extensive real-world experiments show that: PIB not only surpasses the performance of conventional methods like TOCOM-TEM, JPEG, H.264, H.265, and AV1 with a marked improvement of up to 17.8\% in MODA but also achieves a considerable reduction in communication costs by 82.65\%, while retaining low latency and high-quality multi-view sensory data processing under less favorable channel conditions.

\ifCLASSOPTIONcaptionsoff
  \newpage
\fi

\bibliographystyle{IEEEtran}
\bibliography{ref}

\begin{thebibliography}{10}
\providecommand{\url}[1]{#1}
\csname url@samestyle\endcsname
\providecommand{\newblock}{\relax}
\providecommand{\bibinfo}[2]{#2}
\providecommand{\BIBentrySTDinterwordspacing}{\spaceskip=0pt\relax}
\providecommand{\BIBentryALTinterwordstretchfactor}{4}
\providecommand{\BIBentryALTinterwordspacing}{\spaceskip=\fontdimen2\font plus
\BIBentryALTinterwordstretchfactor\fontdimen3\font minus \fontdimen4\font\relax}
\providecommand{\BIBforeignlanguage}[2]{{%
\expandafter\ifx\csname l@#1\endcsname\relax
\typeout{** WARNING: IEEEtran.bst: No hyphenation pattern has been}%
\typeout{** loaded for the language `#1'. Using the pattern for}%
\typeout{** the default language instead.}%
\else
\language=\csname l@#1\endcsname
\fi
#2}}
\providecommand{\BIBdecl}{\relax}
\BIBdecl

\bibitem{fang2024pib}
Z.~Fang, S.~Hu, L.~Yang, Y.~Deng, X.~Chen, and Y.~Fang, ``{PIB: P}rioritized information bottleneck framework for collaborative edge video analytics,'' in \emph{IEEE Global Communications Conference (GLOBECOM)}, Cape Town, South Africa, Dec. 2024, pp. 1--6.

\bibitem{padmanabhan2023gemel}
A.~Padmanabhan, N.~Agarwal, A.~Iyer, G.~Ananthanarayanan, Y.~Shu, N.~Karianakis, G.~H. Xu, and R.~Netravali, ``Gemel: {Model} merging for memory-efficient, real-time video analytics at the edge,'' in \emph{20th USENIX Symposium on Networked Systems Design and Implementation (NSDI 23)}, Boston, MA, 2023, pp. 973--994.

\bibitem{dai2022respire}
X.~Dai, P.~Yang, X.~Zhang, Z.~Dai, and L.~Yu, ``Respire: {R}educing spatial--temporal redundancy for efficient edge-based industrial video analytics,'' \emph{IEEE Transactions on Industrial Informatics}, vol.~18, no.~12, pp. 9324--9334, Mar. 2022.

\bibitem{wang2023contactless}
H.~Wang, J.~Huang, G.~Wang, H.~Lu, and W.~Wang, ``Contactless patient care using hospital {IoT: CCTV} camera based physiological monitoring in {ICU},'' \emph{IEEE Internet of Things Journal}, vol.~11, no.~4, pp. 5781--5797, Aug. 2023.

\bibitem{10175560}
X.~Yu, Z.~Ying, N.~Birkbeck, Y.~Wang, B.~Adsumilli, and A.~C. Bovik, ``Subjective and objective analysis of streamed gaming videos,'' \emph{IEEE Transactions on Games}, pp. 1--14, 2023.

\bibitem{10415249}
J.~Wang, L.~Bai, Z.~Fang, R.~Han, J.~Wang, and J.~Choi, ``Age of information based {URLLC} transmission for {UAVs} on pylon turn,'' \emph{IEEE Transactions on Vehicular Technology}, vol.~73, no.~6, pp. 8797--8809, Jan. 2024.

\bibitem{tang2024decentralizedsemanticcommunicationcooperative}
\BIBentryALTinterwordspacing
M.~Tang, C.~Feng, and T.~Q.~S. Quek, ``Decentralized semantic communication and cooperative tracking control for a {UAV} swarm over wireless {MIMO} fading channels,'' 2024. [Online]. Available: \url{https://arxiv.org/abs/2411.06136}
\BIBentrySTDinterwordspacing

\bibitem{10015055}
G.~Pan, H.~Zhang, S.~Xu, S.~Zhang, and X.~Chen, ``Joint optimization of video-based {AI} inference tasks in {MEC}-assisted augmented reality systems,'' \emph{IEEE Transactions on Cognitive Communications and Networking}, vol.~9, no.~2, pp. 479--493, 2023.

\bibitem{kamarainen2017measurement}
T.~K{\"a}m{\"a}r{\"a}inen, M.~Siekkinen, A.~Yl{\"a}-J{\"a}{\"a}ski, W.~Zhang, and P.~Hui, ``A measurement study on achieving imperceptible latency in mobile cloud gaming,'' in \emph{Proceedings of the 8th ACM on Multimedia Systems Conference}, Taipei, Taiwan, Jun. 2017, pp. 88--99.

\bibitem{9944868}
M.~Xu, W.~C. Ng, W.~Y.~B. Lim, J.~Kang, Z.~Xiong, D.~Niyato, Q.~Yang, X.~Shen, and C.~Miao, ``A full dive into realizing the edge-enabled metaverse: {V}isions, enabling technologies, and challenges,'' \emph{IEEE Communications Surveys \& Tutorials}, vol.~25, no.~1, pp. 656--700, Nov. 2023.

\bibitem{corneo2021much}
L.~Corneo, N.~Mohan, A.~Zavodovski, W.~Wong, C.~Rohner, P.~Gunningberg, and J.~Kangasharju, ``{(How} much) can edge computing change network latency?'' in \emph{IFIP Networking Conference (IFIP Networking)}.\hskip 1em plus 0.5em minus 0.4em\relax Espoo and Helsinki, Finland: IEEE, Jun. 2021, pp. 1--9.

\bibitem{marelli2018scrutinizing}
L.~Marelli and G.~Testa, ``Scrutinizing the {EU} general data protection regulation,'' \emph{Science}, vol. 360, no. 6388, pp. 496--498, May 2018.

\bibitem{Ponemon2021}
{Ponemon Institute}, ``New ponemon institute study finds 60\% of it and security leaders are not confident in their ability to secure access to cloud environments,'' 2021, online Accessed: 2022-07-20.

\bibitem{fang2024pacp}
Z.~Fang, S.~Hu, H.~An, Y.~Zhang, J.~Wang, H.~Cao, X.~Chen, and Y.~Fang, ``{PACP: P}riority-aware collaborative perception for connected and autonomous vehicles,'' \emph{IEEE Transaction of Mobile Computing (DOI: 10.1109/TMC.2024.3449371)}, Aug. 2024.

\bibitem{hu2024fullscenedomaingeneralizationmultiagent}
S.~Hu, Z.~Fang, Y.~Deng, X.~Chen, Y.~Fang, and S.~Kwong, ``Toward full-scene domain generalization in multi-agent collaborative bird’s eye view segmentation for connected and autonomous driving,'' \emph{IEEE Transactions on Intelligent Transportation Systems}, pp. 1--14, 2024.

\bibitem{shao2023task}
J.~Shao, X.~Zhang, and J.~Zhang, ``Task-oriented communication for edge video analytics,'' \emph{IEEE Transactions on Wireless Communications}, vol.~23, no.~5, pp. 4141--4154, May 2024.

\bibitem{al2017deep}
M.~Al-Qizwini, I.~Barjasteh, H.~Al-Qassab, and H.~Radha, ``Deep learning algorithm for autonomous driving using {GoogleNet},'' in \emph{IEEE Intelligent Vehicles Symposium (IV)}, Los Angeles, CA, Jun. 2017, pp. 89--96.

\bibitem{gao2024localization}
K.~Gao, H.~Wang, H.~Lv, and W.~Liu, ``Localization-oriented digital twinning in {6G}: {A} new indoor-positioning paradigm and proof-of-concept,'' \emph{IEEE Transactions on Wireless Communications}, 2024.

\bibitem{yaqoob2020survey}
A.~Yaqoob, T.~Bi, and G.-M. Muntean, ``A survey on adaptive 360 video streaming: {S}olutions, challenges and opportunities,'' \emph{IEEE Communications Surveys \& Tutorials}, vol.~22, no.~4, pp. 2801--2838, 2020.

\bibitem{cui2023stitched}
Y.~Cui, G.~Jiang, M.~Yu, Y.~Chen, and Y.-S. Ho, ``Stitched wide field of view light field image quality assessment: {B}enchmark database and objective metric,'' \emph{IEEE Transactions on Multimedia}, vol.~26, pp. 5092--5107, Nov. 2023.

\bibitem{jiang2020reinforcement}
Z.~Jiang, X.~Zhang, Y.~Xu, Z.~Ma, J.~Sun, and Y.~Zhang, ``Reinforcement learning based rate adaptation for 360-degree video streaming,'' \emph{IEEE Transactions on Broadcasting}, vol.~67, no.~2, pp. 409--423, Oct. 2020.

\bibitem{chen2023vehicle}
X.~Chen, Y.~Deng, H.~Ding, G.~Qu, H.~Zhang, P.~Li, and Y.~Fang, ``Vehicle as a service ({VaaS}): {L}everage vehicles to build service networks and capabilities for smart cities,'' \emph{IEEE Communications Surveys \& Tutorials, (DOI: 10.1109/COMST.2024.3370169)}, 2024.

\bibitem{hu2023adaptive}
S.~Hu, Z.~Fang, H.~An, G.~Xu, Y.~Zhou, X.~Chen, and Y.~Fang, ``Adaptive communications in collaborative perception with domain alignment for autonomous driving,'' in \emph{IEEE Global Communications Conference (GLOBECOM)}, Cape Town, South Africa, Dec. 2024, pp. 1--6.

\bibitem{hu2024agentscodriverlargelanguagemodel}
S.~Hu, Z.~Fang, Z.~Fang, Y.~Deng, X.~Chen, and Y.~Fang, ``Agentscodriver: Large language model empowered collaborative driving with lifelong learning,'' 2024.

\bibitem{hu2024collaborative}
S.~Hu, Z.~Fang, Y.~Deng, X.~Chen, and Y.~Fang, ``Collaborative perception for connected and autonomous driving: Challenges, possible solutions and opportunities,'' \emph{arXiv preprint arXiv:2401.01544}, 2024.

\bibitem{10217163}
X.~Chi, H.~Chen, G.~Li, Z.~Ni, N.~Jiang, and F.~Xia, ``{EDSP-Edge: E}fficient dynamic edge service entity placement for mobile virtual reality systems,'' \emph{IEEE Transactions on Wireless Communications}, vol.~23, no.~4, pp. 2771--2783, Aug. 2024.

\bibitem{10090453}
Y.~Jin, J.~Liu, F.~Wang, and S.~Cui, ``{Ebublio: E}dge-assisted multiuser 360° video streaming,'' \emph{IEEE Internet of Things Journal}, vol.~10, no.~17, pp. 15\,408--15\,419, Apr. 2023.

\bibitem{10167694}
R.~Tu, G.~Jiang, M.~Yu, Y.~Zhang, T.~Luo, and Z.~Zhu, ``Pseudo-reference point cloud quality measurement based on joint {2-D and 3-D} distortion description,'' \emph{IEEE Transactions on Instrumentation and Measurement}, vol.~72, pp. 1--14, Jun. 2023.

\bibitem{10296872}
D.~Wu, D.~Zhang, M.~Zhang, R.~Zhang, F.~Wang, and S.~Cui, ``{ILCAS: I}mitation learning-based configuration-adaptive streaming for live video analytics with cross-camera collaboration,'' \emph{IEEE Transactions on Mobile Computing}, vol.~23, no.~6, pp. 6743--6757, Jun. 2024.

\bibitem{10101710}
Y.-F. Lu, J.-W. Gao, Q.~Yu, Y.~Li, Y.-S. Lv, and H.~Qiao, ``A cross-scale and illumination invariance-based model for robust object detection in traffic surveillance scenarios,'' \emph{IEEE Transactions on Intelligent Transportation Systems}, vol.~24, no.~7, pp. 6989--6999, Apr. 2023.

\bibitem{10018382}
Z.~Bao, S.~Yang, Z.~Huang, M.~Zhou, and Y.~Chen, ``A lightweight block with information flow enhancement for convolutional neural networks,'' \emph{IEEE Transactions on Circuits and Systems for Video Technology}, vol.~33, no.~8, pp. 3570--3584, Aug. 2023.

\bibitem{10526224}
R.~Lu, Z.~Cheng, B.~Chen, and X.~Yuan, ``Motion-aware dynamic graph neural network for video compressive sensing,'' \emph{IEEE Transactions on Pattern Analysis and Machine Intelligence, (DOI: 10.1109/TPAMI.2024.3395804)}, pp. 1--17, May 2024.

\bibitem{10398474}
M.~Xu, H.~Du, D.~Niyato, J.~Kang, Z.~Xiong, S.~Mao, Z.~Han, A.~Jamalipour, D.~I. Kim, X.~Shen, V.~C.~M. Leung, and H.~V. Poor, ``Unleashing the power of edge-cloud generative ai in mobile networks: {A} survey of {AIGC} services,'' \emph{IEEE Communications Surveys \& Tutorials}, vol.~26, no.~2, pp. 1127--1170, Jan. 2024.

\bibitem{10138921}
T.~Li, J.~Sun, Y.~Liu, X.~Zhang, D.~Zhu, Z.~Guo, and L.~Geng, ``{ESMO: J}oint frame scheduling and model caching for edge video analytics,'' \emph{IEEE Transactions on Parallel and Distributed Systems}, vol.~34, no.~8, pp. 2295--2310, May 2023.

\bibitem{khani2023recl}
M.~Khani, G.~Ananthanarayanan, K.~Hsieh, J.~Jiang, R.~Netravali, Y.~Shu, M.~Alizadeh, and V.~Bahl, ``{RECL: R}esponsive resource-efficient continuous learning for video analytics,'' in \emph{20th USENIX Symposium on Networked Systems Design and Implementation (NSDI 23)}, 2023, pp. 917--932.

\bibitem{10025689}
S.~Wang, S.~Bi, and Y.-J.~A. Zhang, ``Edge video analytics with adaptive information gathering: {A} deep reinforcement learning approach,'' \emph{IEEE Transactions on Wireless Communications}, vol.~22, no.~9, pp. 5800--5813, Jan. 2023.

\bibitem{10634888}
P.~Zhang, W.~Xu, Y.~Liu, X.~Qin, K.~Niu, S.~Cui, G.~Shi, Z.~Qin, X.~Xu, F.~Wang, Y.~Meng, C.~Dong, J.~Dai, Q.~Yang, Y.~Sun, D.~Gao, H.~Gao, S.~Han, and X.~Song, ``Intellicise wireless networks from semantic communications: A survey, research issues, and challenges,'' \emph{IEEE Communications Surveys \& Tutorials (DOI: 10.1109/COMST.2024.3443193)}, Aug. 2024.

\bibitem{10540315}
Y.~Shao, Q.~Cao, and D.~Gündüz, ``A theory of semantic communication,'' \emph{IEEE Transactions on Mobile Computing (DOI: 10.1109/TMC.2024.3406375)}, pp. 1--18, May 2024.

\bibitem{10159023}
H.~Xie, Z.~Qin, and G.~Y. Li, ``Semantic communication with memory,'' \emph{IEEE Journal on Selected Areas in Communications}, vol.~41, no.~8, pp. 2658--2669, Jun. 2023.

\bibitem{9775607}
S.~Zhou, D.~Van~Le, R.~Tan, J.~Q. Yang, and D.~Ho, ``Configuration-adaptive wireless visual sensing system with deep reinforcement learning,'' \emph{IEEE Transactions on Mobile Computing}, vol.~22, no.~9, pp. 5078--5091, May 2023.

\bibitem{10107793}
Y.~Chen, S.~Zhang, Y.~Jin, Z.~Qian, M.~Xiao, W.~Li, Y.~Liang, and S.~Lu, ``Crowdsourcing upon learning: {E}nergy-aware dispatch with guarantee for video analytics,'' \emph{IEEE Transactions on Mobile Computing}, vol.~23, no.~4, pp. 3138--3155, 2024.

\bibitem{10438074}
S.~Hu, Z.~Lou, X.~Yan, and Y.~Ye, ``A survey on information bottleneck,'' \emph{IEEE Transactions on Pattern Analysis and Machine Intelligence, (DOI: 10.1109/TPAMI.2024.3366349)}, pp. 1--20, Feb. 2024.

\bibitem{9088132}
A.~Pensia, V.~Jog, and P.-L. Loh, ``Extracting robust and accurate features via a robust information bottleneck,'' \emph{IEEE Journal on Selected Areas in Information Theory}, vol.~1, no.~1, pp. 131--144, May 2020.

\bibitem{10053658}
S.~Wang, C.~Li, Y.~Li, Y.~Yuan, and G.~Wang, ``Self-supervised information bottleneck for deep multi-view subspace clustering,'' \emph{IEEE Transactions on Image Processing}, vol.~32, pp. 1555--1567, Feb. 2023.

\bibitem{wang2020learning}
R.~Wang, X.~He, R.~Yu, W.~Qiu, B.~An, and Z.~Rabinovich, ``Learning efficient multi-agent communication: {A}n information bottleneck approach,'' in \emph{International Conference on Machine Learning (ICML)}, Virtual Event, 2020, pp. 9908--9918.

\bibitem{kim2019schednet}
D.~Kim, S.~Moon, D.~Hostallero, W.~Kang, T.~Lee, K.~Son, and Y.~Yi, ``Learning to schedule communication in multi-agent reinforcement learning,'' \emph{arXiv preprint arXiv:1902.01554}, 2019.

\bibitem{du2021flowcomm}
Y.~Du, B.~Liu, V.~Moens, Z.~Liu, Z.~Ren, J.~Wang, and H.~Zhang, ``Learning correlated communication topology in multi-agent reinforcement learning,'' in \emph{Proceedings of the 20th International Conference on Autonomous Agents and MultiAgent Systems (AAMAS)}, London, United Kingdom, May, 2021, pp. 456--464.

\bibitem{liu2020gacomm}
Y.~Liu, W.~Wang, Y.~Hu, J.~Hao, X.~Chen, and Y.~Gao, ``Multi-agent game abstraction via graph attention neural network,'' in \emph{Proceedings of the Thirty-Fourth AAAI Conference on Artificial Intelligence (AAAI)}, New York, NY, February, 2020, pp. 7211--7218.

\bibitem{niu2021multi}
Y.~Niu, R.~R. Paleja, and M.~C. Gombolay, ``Multi-agent graph-attention communication and teaming,'' in \emph{Proceedings of the 20th International Conference on Autonomous Agents and MultiAgent Systems (AAMAS)}, London, United Kingdom, May, 2021, pp. 964--973.

\bibitem{liu2020when2com}
Y.-C. Liu, J.~Tian, N.~Glaser, and Z.~Kira, ``When2com: {M}ulti-agent perception via communication graph grouping,'' in \emph{IEEE/CVF Conference on Computer Vision and Pattern Recognition (CVPR)}, Seattle, WA (Virtual), 2020, pp. 4106--4115.

\bibitem{tishby2000information}
N.~Tishby, F.~C. Pereira, and W.~Bialek, ``The information bottleneck method,'' \emph{arXiv preprint physics/0004057}, 2000.

\bibitem{alemi2016deep}
A.~A. Alemi, I.~Fischer, J.~V. Dillon, and K.~Murphy, ``Deep variational information bottleneck,'' in \emph{Conf. on Learning Representations (ICLR)}, Toulon, France, Apr. 2017, pp. 1--9.

\bibitem{10021290}
G.~Xiong, S.~Wang, G.~Yan, and J.~Li, ``Reinforcement learning for dynamic dimensioning of cloud caches: {A} restless bandit approach,'' \emph{IEEE/ACM Transactions on Networking}, vol.~31, no.~5, pp. 2147--2161, Oct. 2023.

\bibitem{vershynin2018high}
R.~Vershynin, \emph{High-Dimensional Probability: An Introduction with Applications in Data Science}.\hskip 1em plus 0.5em minus 0.4em\relax Cambridge University Press, 2018, vol.~47.

\bibitem{chavdarova2018wildtrack}
T.~Chavdarova, P.~Baqu{\'e}, S.~Bouquet, A.~Maksai, C.~Jose, T.~Bagautdinov, L.~Lettry, P.~Fua, L.~Van~Gool, and F.~Fleuret, ``Wildtrack: {A} multi-camera hd dataset for dense unscripted pedestrian detection,'' in \emph{IEEE Conference on Computer Vision and Pattern Recognition (CVPR)}, Salt Lake City, UT, Jun. 2018, pp. 5030--5039.

\bibitem{wallace1992jpeg}
G.~K. Wallace, ``The {JPEG} still picture compression standard,'' \emph{IEEE Transactions on Consumer Electronics}, vol.~38, no.~1, pp. xviii--xxxiv, Feb. 1992.

\bibitem{bossen2012hevc}
F.~Bossen, B.~Bross, K.~Suhring, and D.~Flynn, ``{HEVC} complexity and implementation analysis,'' \emph{IEEE Transactions on circuits and Systems for Video Technology}, vol.~22, no.~12, pp. 1685--1696, Oct. 2012.

\bibitem{H264}
\BIBentryALTinterwordspacing
{ITU-T Recommendation H.264 and ISO/IEC 14496-10}, \emph{Advanced Video Coding for Generic Audiovisual Services}, International Telecommunication Union Std., 2003. [Online]. Available: \url{https://www.itu.int/rec/T-REC-H.264}
\BIBentrySTDinterwordspacing

\bibitem{han2021technical}
J.~Han, B.~Li, D.~Mukherjee, C.-H. Chiang, A.~Grange, C.~Chen, H.~Su, S.~Parker, S.~Deng, U.~Joshi \emph{et~al.}, ``A technical overview of {AV1},'' \emph{Proceedings of the IEEE}, vol. 109, no.~9, pp. 1435--1462, Sep. 2021.

\end{thebibliography}

\begin{IEEEbiography}[{\includegraphics[width=1in,height=1.25in,clip,keepaspectratio]{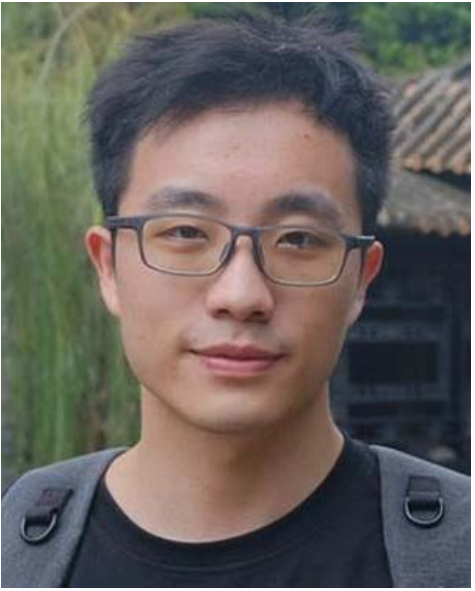}}] {Zhengru Fang} (S'20) received his B.S. degree (Hons.) in electronics and information engineering from the Huazhong University of Science and Technology (HUST), Wuhan, China, in 2019 and received his M.S. degree (Hons.) from Tsinghua University, Beijing, China, in 2022. Currently, he is pursuing his PhD degree in the Department of Computer Science at City University of Hong Kong. His research interests include collaborative perception, V2X, age of information, and mobile edge computing. He serves as a reviewer for \textit{ACM Computing Surveys, IEEE Transactions on Mobile Computing, IEEE Journal on Selected Areas in Communications, IEEE Transactions on Intelligent Transportation Systems, IEEE Internet of Things Journal, IEEE Transactions on Vehicular Technology}, and \textit{IEEE Vehicular Technology Magazine}.
\end{IEEEbiography}

\begin{IEEEbiography}[{\includegraphics[width=1in,height=1.25in,clip,keepaspectratio]{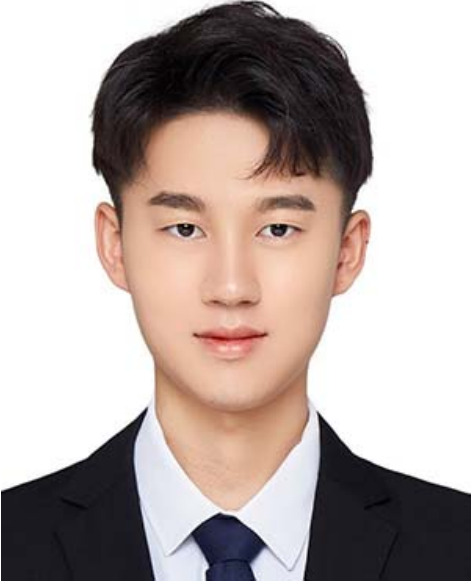}}]{Senkang Hu}
received his B.S. degree in electronic and information engineering from Beijing Institute of Technology, Beijing, China, in 2022. He is currently pursuing his PhD degree in the Department of Computer Science at City University of Hong Kong, Hong Kong. His research interests include autonomous driving, vehicle-to-vehicle collaborative perception.
\end{IEEEbiography}

\begin{IEEEbiography}
[{\includegraphics[width=1in,height=1.25in,clip,keepaspectratio]{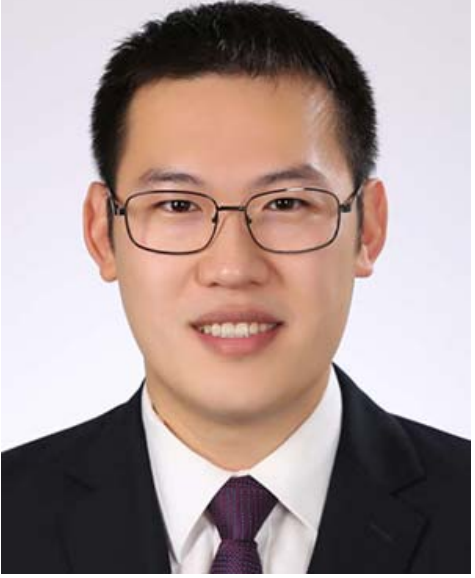}}]{\textbf{Jingjing Wang}} (S'14-M'19-SM'21) received his B.S. degree in Electronic Information Engineering from Dalian University of Technology, Liaoning, China in 2014 and the Ph.D. degree in Information and Communication Engineering from Tsinghua University, Beijing, China in 2019, both with the highest honors. From 2017 to 2018, he visited the Next Generation Wireless Group chaired by Prof. Lajos Hanzo, University of Southampton, UK. Dr. Wang is currently an associate professor at School of Cyber Science and Technology, Beihang University. His research interests include AI enhanced next-generation wireless networks, swarm intelligence and confrontation. He has published over 100 IEEE Journal/Conference papers. Dr. Wang was a recipient of the Best Journal Paper Award of IEEE ComSoc Technical Committee on Green Communications \& Computing in 2018, the Best Paper Award of IEEE ICC and IWCMC in 2019.
\end{IEEEbiography}

\begin{IEEEbiography}
[{\includegraphics[width=1in,height=1.25in,clip,keepaspectratio]{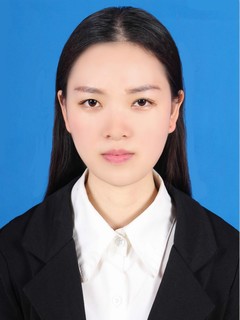}}]{\textbf{Yiqin Deng}} received her MS degree in software engineering and her PhD degree in computer science and technology from Central South University, Changsha, China, in 2017 and 2022, respectively. She is currently a Postdoctoral Researcher with the Department of Computer Science at City University of Hong Kong. Previously, she was a Postdoctoral Research Fellow with the School of Control Science and Engineering, Shandong University, Jinan, China. She was a visiting researcher at the University of Florida, Gainesville, Florida, USA, from 2019 to 2021. Her research interests include edge/fog computing, computing power networks, Internet of Vehicles, and resource management.
\end{IEEEbiography}

\begin{IEEEbiography}[{\includegraphics[width=1in,clip,keepaspectratio]{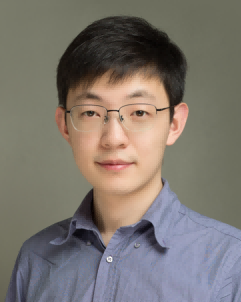}}]{Xianhao Chen}(Member, IEEE) received the B.Eng. degree in electronic information from Southwest Jiaotong University in 2017, and the Ph.D. degree in electrical and computer engineering from the University of Florida in 2022. He is currently an assistant professor at the Department of Electrical and Electronic Engineering, the University of Hong Kong, where he directs the Wireless Information \& Intelligence (WILL) Lab. He serves as a TPC member of several international conferences and an Associate Editor of ACM Computing Surveys. He received the Early Career Award from the Research Grants Council (RGC) of Hong Kong in 2024, the ECE Graduate Excellence Award for research from the University of Florida in 2022, and the ICCC Best Paper Award in 2023. His research interests include wireless networking, edge intelligence, and machine learning.
\end{IEEEbiography}

\begin{IEEEbiography}[{\includegraphics[width=1in,height=1.25in,clip,keepaspectratio]{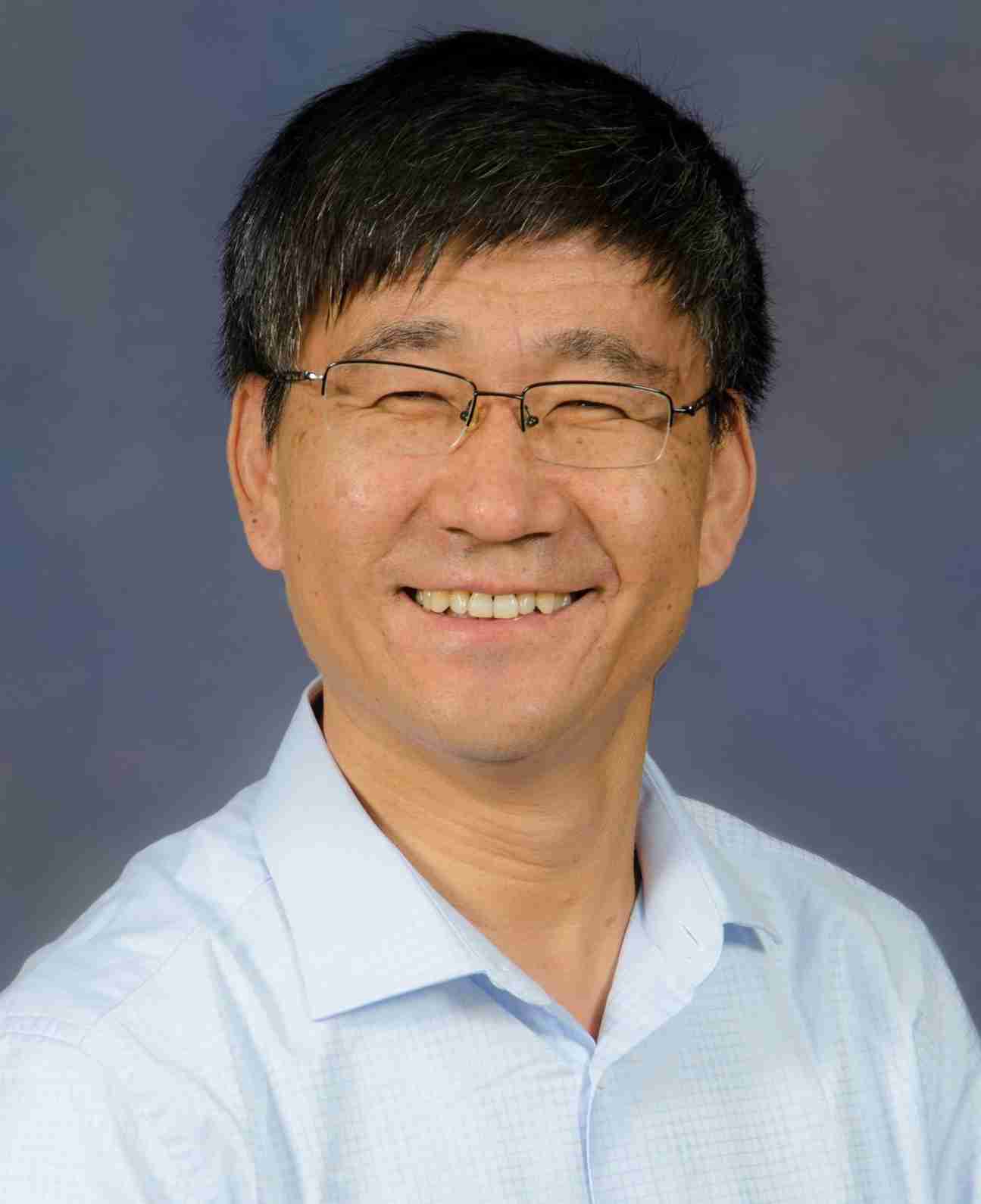}}]{Yuguang Fang}
(S'92, M'97, SM'99, F'08) received
the MS degree from Qufu Normal University, China
in 1987, a PhD degree from Case Western Reserve
University, Cleveland, Ohio, USA, in 1994, and a
PhD degree from Boston University, Boston, Massachusetts, USA in 1997. He joined the Department
of Electrical and Computer Engineering at University of Florida in 2000 as an assistant professor,
then was promoted to associate professor in 2003,
full professor in 2005, and distinguished professor in
2019, respectively. Since August 2022, he has been a Global STEM Scholar and 
Chair Professor with the Department of Computer
Science at City University of Hong Kong.

Prof. Fang received many awards including the US NSF CAREER Award
(2001), US ONR Young Investigator Award (2002), 2018 IEEE Vehicular Technology Outstanding Service Award, IEEE Communications Society
AHSN Technical Achievement Award (2019), CISTC Technical Recognition
Award (2015), and WTC Recognition Award (2014), and 2010-2011 UF
Doctoral Dissertation Advisor/Mentoring Award. He held multiple professorships including the Changjiang Scholar Chair Professorship (2008-2011),
Tsinghua University Guest Chair Professorship (2009-2012), University of
Florida Foundation Preeminence Term Professorship (2019-2022), and University of Florida Research Foundation Professorship (2017-2020, 2006-
2009). He served as the Editor-in-Chief of IEEE Transactions on Vehicular
Technology (2013-2017) and IEEE Wireless Communications (2009-2012)
and serves/served on several editorial boards of journals including Proceedings
of the IEEE (2018-present), ACM Computing Surveys (2017-present), ACM
Transactions on Cyber-Physical Systems (2020-present), IEEE Transactions
on Mobile Computing (2003-2008, 2011-2016, 2019-present), IEEE Transactions on Communications (2000-2011), and IEEE Transactions on Wireless
Communications (2002-2009). He served as the Technical Program Co-Chair of IEEE INFOCOM'2014. He is a Member-at-Large of the Board of
Governors of IEEE Communications Society (2022-2024) and the Director of
Magazines of IEEE Communications Society (2018-2019). He is a fellow of
ACM and AAAS.
\end{IEEEbiography}

{\color{black}{
\begin{appendices}
\section{The Network Architecture of PIB}\label{appendix:architecture}

The PIB framework is designed with efficient computational distribution across the camera and edge server to achieve low latency and high accuracy. As illustrated in Figs. \ref{fig:encoder} and \ref{fig:decoder}, the detailed network architecture is given as follows:

\textbf{Camera:} The camera executes the first two stages of the pipeline: (i) \emph{Feature Extraction} and (ii) \emph{Hyper Encoder}. These stages preprocess the raw video data into a compressed intermediate representation suitable for transmission to the edge server.

\textbf{Edge Server:} Upon receiving the compressed bitstream, the edge server executes (iii) \emph{Hyper Decoder}, (iv) \emph{Projection and Multiview Aggregation}, and (v) \emph{Spatial Aggregation and Classification}. These stages reconstruct the feature maps, fuse multiview information, and generate the pedestrian occupancy map. Below is the detailed breakdown of each stage:

\textbf{(i) Feature Extraction (ResNet-18 Backbone)}:
The feature extraction employs a modified ResNet-18 backbone to retain spatial resolution critical for subsequent projection and fusion. We assume that \(B\) denotes the batch size, \(H\) and \(W\) are the height and width of the input image.
\[
\begin{array}{|c|c|c|}
\hline
\text{Layer Name} & \text{Input Dimensions} & \text{Output Dimensions} \\
\hline
\text{Input Image} & [B, 3, H, W] & [B, 3, 720, 1280] \\
\text{ResNet-18 (Part 1)} & [B, 3, 720, 1280] & [B, 64, 180, 320] \\
\text{ResNet-18 (Part 2)} & [B, 64, 180, 320] & [B, 512, 90, 160] \\
\text{Feature Extraction} & [B, 512, 90, 160] & [B, 8, 90, 160] \\
\hline
\end{array}
\]
\textbf{(ii) Hyper Encoder and  (iii) Decoder for Compression}:
The Hyper Encoder compresses the extracted features at the camera, while the Hyper Decoder reconstructs them at the edge server.
\[
\begin{array}{|c|c|c|}
\hline
\text{Layer Name} & \text{Input Dimensions} & \text{Output Dimensions} \\
\hline
\text{Hyper Encoder} & [B, 8, 90, 160] & [B, 4, 30, 40] \\
\text{Hyper Decoder} & [B, 4, 30, 40] & [B, 8, 90, 160] \\
\hline
\end{array}
\]
\textbf{(iv) Projection and Multiview Aggregation}:
Feature maps are projected onto a common ground plane and aggregated with coordinate maps for multiview fusion. \(H_g\) and \(W_g\) are the height and width of the projected ground plane grid.
\[
\begin{array}{|c|c|c|}
\hline
\text{Layer Name} & \text{Input Dimensions} & \text{Output Dimensions} \\
\hline
\text{Projection} & [B, 8, 90, 160] & [B, 8, H_g, W_g] \\
\text{Concatenation} & [B, 8, H_g, W_g] & [B, N \times 8 + 2, H_g, W_g] \\
\hline
\end{array}
\]
\textbf{(v) Spatial Aggregation and Classification}:
The aggregated features are processed to produce the final pedestrian occupancy map.
\[
\begin{array}{|c|c|c|}
\hline
\text{Layer Name} & \text{Input Dimensions} & \text{Output Dimensions} \\
\hline
\text{Map Classifier} & [B, N \times 8 + 2, H_g, W_g] & [B, 1, H_g, W_g] \\
\hline
\end{array}
\]
\end{appendices}
}}

\end{document}